\providecommand{\U}[1]{\protect\rule{.1in}{.1in}}
\newtheorem{theorem}{Theorem}[section]
\newtheorem{definition}[theorem]{Definition}
\newtheorem{lemma}[theorem]{Lemma}
\newenvironment{proof}[1][Proof]{\noindent\textbf{#1.} }{\ \rule{0.5em}{0.5em}}
\numberwithin{equation}{section}
\newcommand{\E}{{\mathbb E}}
\newcommand{\R}{{\mathbb R}}
\newcommand{\argmin}{\ensuremath{\operatorname*{argmin}}}
\newcommand{\proj}{\ensuremath{\mathrm{Proj}}}
\newcommand{\dd}{\operatorname{d}\! }
\newcommand{\dt}{\operatorname{d}\! t}
\newcommand{\ds}{\operatorname{d}\! s}
\newcommand{\dy}{\operatorname{d}\! y}
\newcommand{\ddp}{\operatorname{d}\! p}
\newcommand{\dw}{\operatorname{d}\! W}
\renewcommand{\geq}{\geqslant}
\renewcommand{\leq}{\leqslant}
\newcommand{\nn}{\nonumber}
\newcommand{\lnu}{\ensuremath{L^{2,\nu}}}
\newcommand{\linnu}{\ensuremath{L^{\infty,\nu}_{\mathcal{P}}(0, T;\mathbb{R})}}
\begin{document}

\title{Constrained monotone mean--variance investment-reinsurance under the Cram\'er--Lundberg model with random coefficients}
\author{ 
Xiaomin Shi\thanks{School of Statistics and Mathematics, Shandong University of Finance and Economics, Jinan, 250100, Shandong, China. Email: \texttt{shixm@mail.sdu.edu.cn}}
\and Zuo Quan Xu\thanks{Department of Applied Mathematics, The Hong Kong Polytechnic University, Kowloon, Hong Kong, China. Email: \texttt{maxu@polyu.edu.hk}}}
\maketitle
\begin{abstract} 
This paper studies an optimal investment-reinsurance problem for an insurer (she) under the Cram\'er--Lundberg model with monotone mean--variance (MMV) criterion. At any time, the insurer can purchase reinsurance (or acquire new business) and invest in a security market consisting of a risk-free asset and multiple risky assets whose excess return rate and volatility rate are allowed to be random. The trading strategy is subject to a general convex cone constraint, encompassing no-shorting constraint as a special case. The optimal investment-reinsurance strategy and optimal value for the MMV problem are deduced by solving certain backward stochastic differential equations with jumps. In the literature, it is known that models with MMV criterion and mean--variance criterion lead to the same optimal strategy and optimal value when the wealth process is continuous. Our result shows that the conclusion remains true even if the wealth process has compensated Poisson jumps and the market coefficients are random.
\end{abstract}
\bigskip
\noindent{\textbf{Keywords}:} Monotone mean-variance, the Cram\'{e}r-Lundberg model, cone constraints, BSDE with jumps, random coefficients
\bigskip

\noindent\textbf{Mathematics Subject Classification (2020)} 91B16. 93E20. 60H30. 91G10.

\section{Introduction}

In order to tame the drawback of non-monotonicity of the celebrated
mean--variance (MV) portfolio selection theory, Maccheroni et al. ~\cite{MMRT}
propose and solve, in a single-period setting, the monotone mean--variance
(MMV) model in which the objective functional is the best approximation of the
MV functional among those which are monotone.
Trybu{\l}a and Zawisza~\cite{TZ} consider the MMV problem in a stochastic
factor model using the
Hamilton--Jacobi-Bellman-Isaacs (HJBI) equations approach. With conic convex
constraints on the portfolios, Shen and Zou~\cite{SZ}, Hu, Shi and
Xu~\cite{HSX} solve the MMV problem with deterministic and random coefficients
by means of the HJBI equation approach and backward stochastic differential
equation (BSDE) approach.
All the obtained optimal investment strategy and the optimal value
in~\cite{HSX,SZ,TZ} coincide with that for the classical MV problem.

In a rather different way,
when the underlying asset prices are continuous, or even with general trading
constraints, Strub and Li~\cite{SL}, Du and Strub~\cite{DS23} prove directly
that the terminal wealth levels corresponding to the optimal portfolio
strategies for both the MMV and the MV problems will drop in the domain of the
monotonicity of the classical MV functional. This, with earlier results
in~\cite{MMRT}, leads to the conclusion that the optimal strategies for the MMV
and the MV problems always coincide when the underlying asset prices are
continuous. Hence further research on the MMV problems should focus on models
where asset prices are not continuous as claimed in~\cite[Conlusions]{DS23}.

Li, Liang, Pang~\cite{LLP23} solve,
in a jump diffusion factor model, hence with discontinuous wealth dynamics, the
MMV problem by HJBI equation method. But the results in~\cite{LLP23} seem not
correct since the Dol\'eans-Dade stochastic exponential is not required to be
positive. In the Cram\'er-Lundberg risk model, a particular jump diffusion
scenario,
Li, Guo and Tian~\cite{LGT}, Li, Liang, Pang~\cite{LLP} study the optimal
investment-reinsurance problems under the criteria of MMV and MV respectively
and indicate that solutions to these two problems coincide as well. In their
model, all the coefficients are deterministic, so the problem is reduced to
solve a partial differential equation.

{In practice, however, the market parameters, such as the interest rate,
stock appreciation and volatility rates are affected by uncertainties caused by
the Brownian motion and Poisson random measure. Thus, it is too restrictive to
set market parameters as deterministic functions of $t$. On the other
hand, in real financial market, especially for an insure, short-selling some or
all the stocks are prohibited. Thus we aim} to generalize the models
in~\cite{LGT,LLP} to allowing random coefficients (both excess
return rates and volatility rate) and convex cone trading constraints. We
follow the same idea of our previous work~\cite{HSX2} without jumps: firstly
guess an optimal portfolio candidate, via a heuristic argument, by means of
specific BSDEs, and then prove a verification theorem rigorously. Along this
line, we eventually provide a {semi-closed-form} solution for the
constrained MMV investment-reinsurance problem in terms of some BSDE with
jumps. Unlike models without jumps, $\psi$ in our model is required to be
no less than $-1+\epsilon$ to guarantee that $\Lambda^{\eta,\psi}$ is a strictly positive
martingale. Properties of the corresponding BSDE help to fit exactly this
subtle point. 

{It is worth pointing out that, in our another preprint~\cite{SX}, we
study the optimal investment-reinsurance problems under the MV criterion, with
the same random coefficients and cone constraints setting. Explicit MV optimal
investment-reinsurance strategies in terms of a partially coupled BSDEs with
jumps (i.e.~\eqref{P1}, \eqref{P2}) are provided, and the existence and
uniqueness of solutions to the corresponding BSDEs are established, which are
the main theoretical contributions of~\cite{SX}.}
Then we compare carefully the optimal strategies for MMV problem and MV
problems obtained in the current paper and~\cite{SX} respectively,
and reach the same conclusion as all the existing models, that is, the MMV and MV {criteria} lead to the same optimal strategy.

The rest part of this paper is organized as follows.
In Section \ref{fm}, we present the financial market and formulate the constrained MMV problem with random coefficients. In Section \ref{Heuri}, we derive heuristically the optimal candidate. Section \ref{Verification} provides a rigorous verification for the optimal investment-reinsurance strategy and optimal value. In Section \ref{Comparison}, we solve the constrained MV problem with random coefficients and make a comparison with the MMV problem. Some concluding remarks are given in Section \ref{conclude}.
\section{Problem formulation}
\label{fm}

Let $(\Omega, \mathcal F, \mathbb{P})$ be a fixed complete probability space on which are defined a standard $n$-dimensional Brownian motion $W$. We denote by $\R^m$ the set of $m$-dimensional column vectors, by $\R^m_+$ the set of vectors in $\R^m$ whose components are nonnegative, by $\R^{m\times n}$ the set of $m\times n$ real matrices, and by $\mathbb{S}^n$ the set of symmetric $n\times n$ real matrices. 
For $M=(m_{ij})\in \R^{m\times n}$, we denote its transpose by $M'$, and its norm by $\vert M\vert =\sqrt{\sum_{ij}m_{ij}^2}$. If $M\in\mathbb{S}^n$ is positive definite (resp. positive semidefinite), we write $M>$ (resp. $\geq$) $0$. We write $A>$ (resp. $\geq$) $B$ if $A, B\in\mathbb{S}^n$ and $A-B>$ (resp. $\geq$) $0$.
As usual, we write $x^+=\max\{x, 0\}$ and $x^-=\max\{-x, 0\}$ for $x\in\R$.

\subsection{The insurance and financial model}

Let
$\{N_t\}$ be a homogeneous Poisson process with intensity $\lambda>0$ which counts the number of claims. Let $Y_i$ denote the payment amount of the $i^{\mathrm{th}}$ claim, $i=1,2, \ldots $, and assume they are bounded 
independent and identically distributed nonnegative random variables with a common probability distribution function $\nu:\R_+\rightarrow[0,1]$.
{In insurance practice, an insurance claim shall never exceed the value of the insured asset,
which is usually upper bounded by some constant (such as the value of a new replacement of the insured car or house). Even if the insured asset (such as human life) is invaluable, there is usually a maximum payment amount in practice. Therefore, it is reasonable to assume the claims $Y_i$ are bounded.} Then the first and second moments of the claims exist, denoted by
\begin{equation}
\label{def:bY}
b_Y=\int_{\R_+} y \nu(\dy)>0,\quad \sigma_Y^2=\int_{\R_+} y^2 \nu(\dy)>0.
\end{equation}
In the classical Cram\'er-Lundberg model, the surplus $R_t$ dynamic without reinsurance or investment follows
\begin{equation}
\label{Rprocess}
R_t=R_0+pt-\sum_{i=1}^{N_t} Y_i, 
\end{equation}
where $p$ is the premium rate which is assumed to be calculated
according to the expected value principle, i.e.~$p=(1+\eta)\lambda b_Y>0$ and $\eta>0$ is
the relative safety loading of the insurer. As in~\cite{SZ2}, we use a Poisson
random measure $\gamma$ on $[0,T]\times\Omega\times\R_+$ to represent the compound Poisson
process $\sum_{i=1}^{N_t} Y_i$ as\footnote{{This method was also used to study a more
general process in Elliott, Siu, Yang~\cite{ESY09}.}} 
\[
\int_0^t\int_{\R_+}y\gamma(\ds,\dy)=\sum_{i=1}^{N_t} Y_i.
\]
Assume that $N, Y_i, i=1,2, \ldots $ are independent, then
\[
\E\Bigl[\sum_{i=1}^{N_t} Y_i\Bigr]=\E[N_t]\E[Y_1]=\lambda t\int_{\R_+}y\nu(\dy).
\]
The compensated Poisson random measure is denoted by 
\[
\tilde \gamma(\dt,\dy)=\gamma(\dt,\dy)-\lambda\nu(\dy)\dt.
\]

We suppose that the Brownian motion $W$ and the Poisson random measure $\gamma(\dt,\dy)$ are independent processes {under the probability measure $\mathbb{P}$}. Define the filtration $\mathbb{F}=\{\mathcal{F}_t,t\geq0\}$ as the augmented natural filtration {generated by} $W$ and $\gamma$.

Let $T>0$ be a constant that will stand for the investment horizon throughout the paper.
Let $\mathcal{P}$ be the $\mathbb{F}$-predictable $\sigma$-field on $[0,T]\times\Omega$, $\mathcal{B}(\mathbb{R}_+)$ the Borel $\sigma$-algebra of $\mathbb{R}_+$.

We now introduce several spaces. 
Let $L^{2}_{\mathbb F}(0, T;\mathbb{R})$ be the set of $\mathcal{P}$-measurable functions $\varphi:[0, T]\times\Omega\to\mathbb{R}$ such that $\E\int_0^T\vert \varphi\vert ^2\dt<\infty$. 
Let $\lnu$ be the set of $\mathcal{B}(\R_+)$-measurable functions $\varphi :\R_+\rightarrow\mathbb{R}$ such that $\int_{\R_+} \varphi(y)^2\nu(\dy)<\infty$.
Let $L^{2,\nu}_{\mathcal{P}}(0, T;\mathbb{R})$ be the set of $\mathcal{P}\otimes\mathcal{B}(\R_+)$-measurable functions $\varphi :[0, T]\times\Omega\times \R_+\rightarrow
\mathbb{R}$ such that $\E\int_0^T\int_{\R_+}\vert \phi\vert ^2\lambda\nu(\dy)\dt<\infty$. 
Let $L^{\infty,\nu}_{\mathcal{P}}(0, T;\mathbb{R})$ be the set of functions $\varphi\in L^{2,\nu}_{\mathcal{P}}(0, T;\mathbb{R})$ which are essentially bounded w.r.t. $\dt\otimes \dd\mathbb{P}\otimes \nu(\dy)$.
Let $S^{\infty}_{\mathbb{F}}(0,T;\mathbb{R})$ be the set of $\mathbb{F}$-adapted functions $\varphi :[0,T]\times\Omega\rightarrow\mathbb{R}
$ which are c\`ad-l\`ag and essentially bounded w.r.t. $\dt\otimes \dd\mathbb{P}$. 
The above definitions are generalized in the obvious way to the cases that $\mathbb{R}$ is replaced by $\mathbb{R}^n$, $\mathbb{R}^{n\times m}$ or $\mathbb{S}^n$.
In our argument, $s$, $t$, $\omega$, ``almost surely'' and ``almost everywhere'', will be suppressed for simplicity in many circumstances, when no confusion occurs.

We consider an insurer (she) who is allowed to purchase reinsurances or acquire new businesses to control its exposure to the insurance risk. Let $q_t$ be the value of risk exposure, which represents the insurer's retention level of insurance risk at time $t$. When $q_t\in[0,1]$, it corresponds to a proportional reinsurance cover; in this case, the insurer only need pays $100 q_t\%$ of each claim, and the reinsurer pays the rest;
as a remedy,
the insurer diverts part of the premium to the reinsurer at the rate of $(1-q_t)(1+\eta_r)\lambda b_Y$, where $\eta_r\geq\eta$ is the reinsurer's relative security loading. When $q_t>1$, it corresponds to acquiring new business. The process $q_t$ is called a reinsurance strategy for convenience. By adopting a reinsurance strategy $q_t$, the insurer's surplus process $R^q$ follows
\begin{align*}
\dd R^q_t&=(1+\eta)\lambda b_{Y}\dt-(1-q_t)(1+\eta_r)\lambda b_{Y}\dt -q_t\int_{\R_+}y\gamma(\dt,\dy)\\
&=(\eta_r q_t+\eta-\eta_r)\lambda b_Y\dt-q_t\int_{\R_+}y\tilde \gamma(\dt,\dy).
\end{align*}

The insurer is allowed to invest her wealth in a financial market consisting of a risk-free asset (the money market
instrument or bond) whose price is $S_{0}$ and $m$ risky securities (the
stocks) whose prices are $S_{1}, \ldots, S_{m}$. And assume $m\leq n$, i.e.~the number of risky securities is no more than the dimension of the Brownian motion {(part of the source of market randomness)}. {In this case, there always have randomness that cannot be perfectly hedged. Hence this is an incomplete market.}
The asset prices $S_k$, $k=0, 1, \ldots, m$, are driven by stochastic differential equations (SDEs):
\[
\begin{cases}
\dd S_{0,t}=r_tS_{0,t}\dt, \\
S_{0,0}=s_0,
\end{cases}\]
and
\[
\begin{cases}
\dd S_{k,t}=S_{k,t}\Bigl((\mu_{k,t}+r_t)\dt+\displaystyle \sum_{j=1}^n\sigma_{kj,t}\dw_{j,t}\Bigr), \\
S_{k,0}=s_k,
\end{cases}\]
where, for every $k=1, \ldots, m$, $r$ is the interest rate process, $\mu_k$ and $\sigma_k:= (\sigma_{k1}, \ldots, \sigma_{kn})$ are the mean excess return rate process and volatility rate process of the $k$th risky security.
Define the mean excess return vector
$\mu=(\mu_1, \ldots, \mu_m)^{\top}$,
and the volatility matrix $\sigma= (\sigma_{kj})_{m\times n}$. 
Unlike most existing models (e.g.~Li, Liang, Pang~\cite{LLP23}), we allow them
to be random, {non-anticipative with respect to the filtration
$\mathbb{F}$}.
Precisely, we assume $\mu\in L_{\mathbb{F}}^\infty(0, T;\mathbb{R}^m)$ and
$\sigma\in L_{\mathbb{F}}^\infty(0, T;\mathbb{R}^{m\times n})$. Furthermore, there exists a constant $\delta>0$ such that
$\sigma\sigma'\geq \delta 1_m$ for a.e. $t\in[0, T]$, where $1_m$ denotes the $m$-dimensional identity matrix.
Same as most of existing models, we assume the interest rate $r$ is a
bounded deterministic measurable function of $t$. It is a
long-standing problem how to solve cone constrained MV problem when the
interest rate $r$ is random. {Actually, with random interest rate
$r$, it is very difficult to construct one (or even more are required)
auxiliary adapted process to recover the homogeneity, a critical property used
in~\cite{HSX,HZ} for studying cone constrained stochastic control problems. In
the cone constrained MMV problem~\cite{HSX2}, the main difficulty in carrying
out the construction of $R^{(\eta,\psi,\pi,q)}$ could not be avoided in this paper if
$r$ is random. }

We assume the insurer is a small investor so that her actions cannot affect the asset prices. She can decide at every time
$t\in[0, T]$ the amount $\pi_{j,t}$ of her wealth to invest in the $j$th risky asset, $j=1, \ldots, m$. The vector process $\pi:= (\pi_1, \ldots, \pi_m)'$ is called a portfolio of the investor. Then the investor's self-financing wealth process $X$ corresponding to an (investment-reinsurance) strategy $(\pi,q)$ satisfies the following SDE:
\begin{equation}
\label{wealth}
\begin{cases} 
\dd X_t=[r_tX_{t-}+\pi_t'\sigma_t\phi_t+bq_t+a]\dt +\pi_t'\sigma_t\dw_t-q_t\int_{\R_+}y\tilde \gamma(\dt,\dy), \\
\; X_0=x,
\end{cases}\end{equation}
where
\[\phi:= \sigma'(\sigma\sigma)^{-1}\mu, ~~b:= \lambda b_Y\eta_r>0,~~ a:= \lambda b_Y(\eta-\eta_r).
\]

We may write $X^{\pi,q} $ instead of $X$ to emphasis the dependent of $X$ on the strategy $(\pi,q)$ in \eqref{wealth}.

Let $\Pi$ be a given closed convex cone in $\mathbb{R}^m$, {i.e.,~$\Pi$ is closed, convex, and if $u\in\Pi$, then $c u\in\Pi$, for all $c\geq 0$.} 
It is the constraint set for investment. { Note that both $\Pi=\R^m_+$ and $\Pi=\R^{m_0}_+\times\R^{m-m_0}$ ($m_0< m$) are exactly closed convex cones. And the former means no-shorting constraint; while the later means that shorting-selling the first $m_0$ stocks are prohibited.}
The class of admissible investment-reinsurance strategies is defined as
\[
\mathcal{U}:= \Bigl\{(\pi,q)\in L^2_{\mathbb{F}}(0, T;\mathbb{R}^{m+1})\;\Big\vert \; \pi \in\Pi, \ q\geq0\Bigr\}.
\]
For any admissible strategy $(\pi,q)\in\mathcal{U}$, the wealth process \eqref{wealth} admits a unique strong solution $X^{\pi,q}$.

\subsection{A family of probability measures}

{For any $(\eta,\psi)\in L^2_{\mathcal{F}}(0, T;\mathbb{R}^m)\times L^{2,\nu}_{\mathcal{P}}(0,T;\R)$, the process 
\[
M_t:= \int_0^t\eta_s'\dw_s+\int_0^t\int_{\R_+}\psi_s(y)\tilde\gamma(\ds,\dy)\]
is a square integrable martingale. According to~\cite[Theorem 10.9]{HWY}, the
above process $M_t$ is a BMO martingale if and only if there exists a
constant $c>0$ such that
\begin{align}
&\E\Bigl[\int_\tau^T\Bigl(\vert \eta_s\vert ^2+\int_{\R_+}\vert \psi_{s}(y)\vert ^2\nu(\dy)\Bigr)\ds\Big\vert \mathcal{F}_{\tau}\Bigr]\leq c,\label{BMO1}~~~\vert M_{\tau}-M_{\tau-}\vert ^2\leq c, 
\end{align}
for any $\mathbb{F}$-stopping times $\tau\leq T$.
}
{
The following result concerning the Dol\'eans-Dade stochastic exponential of a
BMO martingale, can be found in~\cite{Ka}.}
\begin{lemma}[Kazamaki's Criterion]
Let $M$ be a BMO martingale such that there exists $\varepsilon>0$ with $M_{t}-M_{t-}\geq -1+\varepsilon$, for all $t\in[0,T]$, $\mathbb{P}$-a.s., then $\mathcal{E}(M)$ is a strictly positive martingale.
\end{lemma}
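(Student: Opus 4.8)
The plan is to prove the two assertions separately: strict positivity, which is elementary, and the (uniformly integrable) martingale property, which is the substantive part. First I would record the Dol\'eans--Dade formula
\[
\mathcal{E}(M)_t=\exp\Bigl(M_t-\tfrac12\langle M^c\rangle_t\Bigr)\prod_{0<s\leq t}(1+\Delta M_s)\,e^{-\Delta M_s},
\]
where $M^c$ is the continuous martingale part and $\Delta M_s=M_s-M_{s-}$. The exponential factor is manifestly positive, and the hypothesis $\Delta M_s\geq-1+\varepsilon$ gives $1+\Delta M_s\geq\varepsilon>0$ at every jump, so no factor vanishes or changes sign. To see that the (possibly infinite) product converges to a strictly positive limit I would use that a BMO martingale has square-summable jumps, $\sum_{s\leq t}(\Delta M_s)^2<\infty$; combined with the elementary bound $\vert\log(1+x)-x\vert\leq C_\varepsilon\, x^2$ valid for $x\geq-1+\varepsilon$, this makes $\sum_{s\leq t}\bigl(\log(1+\Delta M_s)-\Delta M_s\bigr)$ absolutely convergent, hence $\mathcal{E}(M)_t>0$ almost surely for every $t$.

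Next, since $\mathcal{E}(M)$ is the unique solution of $\dd Z_t=Z_{t-}\,\dd M_t$ with $Z_0=1$, it is a local martingale; being nonnegative it is a supermartingale by Fatou's lemma, so $\E[\mathcal{E}(M)_t]\leq1$, and it suffices to upgrade this to an equality. The route I would follow is to establish a reverse H\"older inequality: using the bound on the BMO norm of $M$ together with the jump lower bound, there should exist $p>1$ and a constant $C$ such that $\E[\mathcal{E}(M)_T^p\mid\mathcal{F}_\tau]\leq C\,\mathcal{E}(M)_\tau^p$ for every $\mathbb{F}$-stopping time $\tau\leq T$. Such an $(R_p)$ inequality forces the family $\{\mathcal{E}(M)_\tau:\tau\leq T\}$ to be uniformly integrable, which promotes the supermartingale to a genuine (uniformly integrable) martingale.

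The main obstacle is precisely this last estimate in the jump setting. To prove $(R_p)$ I would write $\mathcal{E}(M)^p$ as a stochastic exponential $\mathcal{E}(\widetilde M)$ times a predictable finite-variation correction, and control the conditional increments of that correction by the BMO norm via a John--Nirenberg type exponential inequality for BMO martingales. The delicate point is the jump contribution $\sum[(1+\Delta M)^p-1-p\,\Delta M]$: the lower bound $\Delta M\geq-1+\varepsilon$ is exactly what keeps this term from producing an uncontrollable singularity, so that choosing $p$ close enough to $1$ (depending on the BMO norm and on $\varepsilon$) renders the correction small enough for the exponential estimate to close. This is the step where the hypotheses are genuinely used, and where I expect the bookkeeping to be heaviest; since the statement is classical, in the paper itself one simply invokes~\cite{Ka}.
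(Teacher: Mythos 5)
The paper offers no proof of this lemma at all: it is imported as a known result with the remark that it ``can be found in~\cite{Ka}'', so there is no internal argument to compare yours against. Your sketch correctly reconstructs the standard proof from the literature. The positivity half is right and even slightly easier than you make it: by the second condition in \eqref{BMO1} the jumps of a BMO martingale are uniformly bounded, not merely square-summable, so the convergence of $\sum_{s\leq t}\bigl(\log(1+\Delta M_s)-\Delta M_s\bigr)$ needs only the Taylor bound on a compact jump range; in any case $1+\Delta M_s\geq\varepsilon$ ensures no factor vanishes. The martingale half follows the classical route: nonnegative local martingale $\Rightarrow$ supermartingale, then a reverse H\"older inequality $(R_p)$ for some $p>1$ depending on the BMO norm and on $\varepsilon$, giving an $L^p$ bound that yields uniform integrability. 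One piece of bookkeeping you gloss over: to make the localization argument close, apply $(R_p)$ to the stopped martingales $M^{\tau_n}$, whose BMO norms are bounded uniformly in $n$, so that $\E[\mathcal{E}(M)_{\tau_n}^p]\leq C$ with a single constant and the stopped family is uniformly integrable. The genuine caveat is that your $(R_p)$ step --- which you rightly identify as the crux, and where the hypothesis $\Delta M\geq-1+\varepsilon$ is truly used to control $\sum\bigl[(1+\Delta M)^p-1-p\,\Delta M\bigr]$ --- is asserted as a plan rather than executed, so your write-up is a roadmap for the classical proof, not a complete argument; given that the paper itself simply cites the result, this is a reasonable stopping point. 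A final bibliographic note: the cited book~\cite{Ka} treats \emph{continuous} exponential martingales, whereas the statement here is the jump version; the latter goes back to Kazamaki's earlier work on discontinuous BMO martingales (and can also be obtained from L\'epingle--M\'emin-type criteria), so the precise jump result as stated is not literally in the book the paper cites --- your proof plan is, in effect, filling exactly that gap.
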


{Denote
\begin{align}\label{def:A}
\mathcal{A} := \bigcup_{\epsilon>0}\Bigl\{ &(\eta,\psi)\in L^2_{\mathcal{F}}(0, T;\mathbb{R}^m)\times L^{\infty,\nu}_{\mathcal{P}}(0,T;\R)\nn\\
 &\Big\vert \; (\eta,\psi) \ \mbox{satisfies} \ \text{\eqref{BMO1}},~ \psi\geq -1+\epsilon,~\E[(\Lambda_T^{\eta,\psi})^{2}]<\infty \Bigr\}.
\end{align} 
Then for any processes $(\eta,\psi)\in\mathcal{A}$, the Kazamaki's criterion clearly holds. Therefore, we can define a probability $\mathbb{P}^{\eta,\psi}$ through
\[
\frac{d\mathbb{P}^{\eta,\psi}}{d\mathbb{P}}\bigg\vert _{\mathcal{F}_t}=\Lambda^{\eta,\psi}_t, \ t\in[0,T],\]
where the Dol\'eans-Dade stochastic exponential 
\[
\Lambda^{\eta,\psi}:= \mathcal{E}\left(\int_0^t\eta_s'\dw_s+\int_0^t\int_{\R_+}\psi_s(y)\tilde\gamma(\ds,\dy)\right)\]
is a strictly positive martingale on $[0,T]$.

By Girsanov's theorem, 
\[
W^\eta_t:= W_t-\int_0^t\eta_s\ds, \ t\in[0,T]\]
is a Brownian motion, and 
\[
\tilde\gamma^{\psi}(\dt,\dy):= \tilde\gamma(\dt,\dy)-\lambda\psi_t(y)\nu(\dy), \ t\in[0,T]\]
is a compensated Poisson random measure under $\mathbb{P}^{\eta,\psi}$.}

\subsection{The monotone mean--variance problem}

According to~\cite[Page 489]{MMRT}, the MMV functional, defined by 
\[
\inf_{(\eta,\psi)\in\mathcal{A}}\E^{\mathbb{P}^{\eta,\psi}}
\Bigl[X_T^{\pi,q}+\frac{1}{2\theta}(\Lambda^{\eta,\psi}_T-1)\Bigr]\]
is the minimal monotone functional that dominates the MV functional $$\E^{\mathbb{P}}(X_T^{\pi,q})-\frac{\theta}{2}Var^{\mathbb{P}}(X_T^{\pi,q}),$$ where $\theta$ is a given positive constant {measuring the risk aversion of the insurer.}

In this paper, we consider the following MMV problem:
\begin{equation}
\label{MMV}
\sup_{(\pi,q)\in\mathcal{U}}\inf_{(\eta,\psi)\in\mathcal{A}}\E^{\mathbb{P}^{\eta,\psi}}\Bigl[X_T^{\pi,q}+\frac{1}{2\theta}(\Lambda^{\eta,\psi}_T-1)\Bigr],
\end{equation}
{The last condition in \eqref{def:A} ensures that the expectation in \eqref{MMV} is finite.}

Without loss of generality, we may assume $a=0$ in \eqref{wealth}. Indeed, by setting $\widetilde X_t=X_t+a\int_t^Te^{-\int_t^sr_{\alpha}\dd\alpha}\ds$, the optimization problem in \eqref{MMV} will not change and all the results in this paper remain true with $X_t$ replaced by $\widetilde X_t$.

\section{Optimal candidate: a heuristic derivation}

\label{Heuri}
In order to solve the problem \eqref{MMV}, we hope, via a heuristic argument, to find a family of stochastic process $R^{(\eta,\psi,\pi,q)}$ and a quaternion $(\hat\eta,\hat\psi,\hat\pi,\hat q)$ with the following properties:
\begin{enumerate}
\item $R^{(\eta,\psi,\pi,q)}_T=X_T^{\pi,q}+\frac{1}{2\theta}(\Lambda^{\eta,\psi}_T-1)$ for all $(\eta,\psi,\pi,q)\in\mathcal{A}\times\mathcal{U}$.

\item $R^{(\eta,\psi,\pi,q)}_0=R_0$ is constant for all $(\eta,\psi,\pi,q)\in\mathcal{A}\times\mathcal{U}$.

\item $\E^{\mathbb{P}^{\hat\eta,\hat\psi}}\Bigl[X_T^{\pi,q}+\frac{1}{2\theta}(\Lambda^{\hat\eta,\hat\psi}_T-1)\Bigr]
\leq R_0$ for all $(\pi,q)\in\mathcal{U}$.

\item $\E^{\mathbb{P}^{\eta,\psi}}\Bigl[X_T^{\hat\pi,\hat q}+\frac{1}{2\theta}(\Lambda^{\eta,\psi}_T-1)\Bigr]\geq R_0$ for all $(\eta,\psi)\in\mathcal{A}$.

\end{enumerate} 
If this is done, we will then rigorously show that $(\hat\eta,\hat\psi,\hat\pi,\hat q)$ is an optimal solution for the problem \eqref{MMV} and $R_0$ is its optimal value.
Clearly, the third and fourth properties ensure that $(\hat\eta,\hat\psi,\hat\pi,\hat q)$ is a saddle point for \eqref{MMV} with the optimal value 
\[
\E^{\mathbb{P}^{\hat\eta,\hat\psi}}\Bigl[X_T^{\hat\pi,\hat q}+\frac{1}{2\theta}(\Lambda^{\hat\eta,\hat\psi}_T-1)\Bigr]=R_0.
\]

We consider the following family:
\[ 
R^{(\eta,\psi,\pi,q)}_t=X_t^{\pi,q}h_t+\frac{1}{2\theta}(\Lambda_t^{\eta,\psi} Y_t-1),\]
where $(Y,Z,V)$ and $(h,L,\Phi)$ satisfy the following BSDEs with jumps, respectively, 
{(we shall often suppress the argument $t$ in BSDEs for notation simplicity)}
\begin{equation}
\label{Y0}
\begin{cases}
\dd Y_t=-f\dt+Z'\dw+\int_{\R_+}V(y)\tilde \gamma(\dt,\dy),\\
Y_T=1,\quad Y>0, \quad Y+V>0,
\end{cases}\end{equation}
and
\begin{equation}
\label{h0}
\begin{cases}
\dd h_t=-g\dt+L'\dw+\int_{\R_+}\Phi(y)\tilde \gamma(\dt,\dy),\\
h_T=1.
\end{cases}\end{equation} 
Our aim reduces to finding proper drivers $f$ and $g$, which are independent of $(\eta,\psi,\pi,q)$, such that the desired properties hold. 

Notice the first property is already satisfied by the choice of {terminal conditions $Y_{T}=h_T=1$ in the above BSDEs with jumps}.
Because $Y$ and $h$ are independent of $(\eta,\psi,\pi,q)$, it follows
\begin{align*}
R^{(\eta,\psi,\pi,q)}_0&=X_0^{\pi,q}h_0+\frac{1}{2\theta}(\Lambda_0^{\eta,\psi} Y_0-1) =xh_0+\frac{1}{2\theta}(Y_0-1):= R_0
\end{align*}
is a constant. So the second property follows.

Now rewrite the BSDEs \eqref{Y0} and \eqref{h0} in terms of $W^\eta$ and $\tilde\gamma^{\psi}$ as
\[ 
\begin{cases}
\dd Y_t=(-f+Z'\eta+\int_{\R_+}V(y)\psi(y)\lambda\nu(\dy))\dt +Z'\dw^\eta+\int_{\R_+}V(y)\tilde\gamma^{\psi}(\dt,\dy),\\
Y_T=1,\quad Y>0, \quad Y+V>0,
\end{cases}\]
and
\begin{equation}
\label{defh}
\begin{cases}
\dd h_t=(-g+L'\eta+\int_{\R_+}\Phi(y)\psi(y)\lambda\nu(\dy))\dt +L'\dw^\eta+\int_{\R_+}\Phi\tilde\gamma^{\psi}(\dt,\dy),\\
h_T=1.
\end{cases}\end{equation} 
Applying It\^o's formula we obtain
\begin{align}\label{R2}
\dd R^{(\eta,\psi,\pi,q)}_t 
&=\Bigl[\frac{\Lambda}{2\theta}\Bigl(Y\vert \eta\vert ^2+2 \eta'Z+\frac{2\theta}{\Lambda}h\eta'\sigma'\pi\Bigr)
\nn\\
&\quad\;+\frac{\Lambda}{2\theta}\int_{\R_+}\Bigl((Y+V)\psi^2+2V\psi
-\frac{2\theta}{\Lambda}hqy\psi\Bigr)\lambda\nu(\dy)\nn\\
&\quad\;+\Bigl(rh-g+L'\eta +\int_{\R_+}\Phi\psi\lambda\nu(\dy)\Bigr)X +\pi'(h\sigma\phi+\sigma L)\nn\\
&\quad\;+q\Bigl(hb-\int_{\R_+}y\Phi(1+\psi)\lambda\nu(\dy)\Bigr)-\frac{\Lambda}{2\theta}f\Bigr]\dt\nn\\
&~~+(\cdots)\dw^{\eta}+\int_{\R_+}(\cdots)\tilde\gamma^{\psi}(\dt,\dy).
\end{align} 
Since $r$ is deterministic function of $t$, we see
\begin{equation}
\label{h}
g=rh, ~ h_t=e^{\int_t^Tr_s\ds}, \ L_t=0, \ \Phi_t(y)=0,
\end{equation}
fulfills \eqref{defh}. From now on, we fix this choice, under which \eqref{R2} becomes
\begin{align*}
\dd R^{(\eta,\psi,\pi,q)}_t 
&= \biggl\{\frac{\Lambda Y}{2\theta}\Big\vert \eta+\frac{1}{Y}\Bigl(Z+\frac{\theta h}{\Lambda}\sigma'\pi\Bigr)\Big\vert ^2-\frac{\Lambda Y}{2\theta}\frac{1}{ Y^2}\Big\vert \Bigl(Z+\frac{\theta h}{\Lambda}\sigma'\pi\Bigr)\Big\vert ^2\\
&\qquad+\frac{\Lambda }{2\theta}\int_{\R_+}\Bigl[(Y+V)\Big\vert \psi+\frac{1}{Y+V}\Bigl(V-\frac{\theta}{\Lambda}hqy\Bigr)\Big\vert ^2\nn\\
&\qquad-\frac{1}{Y+V}\Big\vert V-\frac{\theta}{\Lambda}hqy\Big\vert ^2\Bigr]\lambda\nu(\dy) +\pi'h\sigma\phi+qhb-\frac{\Lambda}{2\theta}f\biggr\}\dt\nn\\
&~~+(\cdots)\dw^{\eta}+\int_{\R_+}(\cdots)\tilde\gamma^{\psi}(\dt,\dy).
\end{align*}
Integrating from $0$ to $T$, and taking expectation $\E^{\mathbb{P}^{\eta,\psi}}$, we have
\begin{align}
\E^{\mathbb{P}^{\eta,\psi}}[R^{(\eta,\psi,\pi,q)}_{T}]
=R_0&+\E^{\mathbb{P}^{\eta,\psi}}\int_0^{T} \frac{\Lambda }{2\theta}\biggm\{ Y\Big\vert \eta
+\frac{1}{ Y}\Bigl(Z+\frac{\theta}{\Lambda} h\sigma'\pi\Bigr)\Big\vert ^2\nonumber\\
&+ \int_{\R_+}(Y+V)\Big\vert \psi+\frac{1}{Y+V}\Bigl(V-\frac{\theta}{\Lambda}hqy\Bigr)\Big\vert ^2\lambda\nu(\dy)
\nn\\
&\;-\Bigm[f+\frac{1}{Y}\Big\vert Z+\frac{\theta}{\Lambda} h\sigma'\pi\Big\vert ^2+\int_{\R_+}\frac{1}{Y+V}\Big\vert V-\frac{\theta}{\Lambda}hqy\Big\vert ^2\lambda\nu(\dy)\nn\\
&-\frac{2\theta}{\Lambda} h\pi'\sigma\phi-\frac{2\theta}{\Lambda} hqb\Bigm]\biggm\}\ds.\label{exp}
\end{align}

To fulfill the third and fourth properties,
since $\theta, \Lambda>0$, {it suffices to} find $f$, $\hat\eta, \hat\psi$, $\hat\pi$ and $\hat q$ such that
\begin{align}\label{ineq2}
&Y\Big\vert \hat\eta+\frac{1}{ Y}\Bigl(Z+\frac{\theta}{\Lambda} h\sigma'\pi\Bigr)\Big\vert ^2 +\int_{\R_+}(Y+V)\Big\vert \hat\psi+\frac{1}{Y+V}\Bigl(V-\frac{\theta}{\Lambda}hqy\Bigr)\Big\vert ^2\lambda\nu(\dy)\nn\\
&\quad-\Bigm[f+\frac{1}{Y}\Big\vert Z+\frac{\theta}{\Lambda} h\sigma'\pi\Big\vert ^2+\int_{\R_+}\frac{1}{Y+V}\Big\vert V-\frac{\theta}{\Lambda}hqy\Big\vert ^2\lambda\nu(\dy) -\frac{2\theta}{\Lambda} h\pi'\sigma\phi-\frac{2\theta}{\Lambda} hqb\Bigm]\leq 0
\end{align}
for all $\pi,q$;
\begin{align}\label{ineq1}
&Y\Big\vert \eta+\frac{1}{ Y}\Bigl(Z+\frac{\theta}{\Lambda} h\sigma'\hat\pi\Bigr)\Big\vert ^2 +\int_{\R_+}(Y+V)\Big\vert \psi+\frac{1}{Y+V}\Bigl(V-\frac{\theta}{\Lambda}h\hat qy\Bigr)\Big\vert ^2\lambda\nu(\dy)\nn\\
&-\Bigl[f+\frac{1}{Y}\Big\vert Z+\frac{\theta}{\Lambda} h\sigma'\hat\pi\Big\vert ^2+\int_{\R_+}\frac{1}{Y+V}\Big\vert V-\frac{\theta}{\Lambda}h\hat qy\Big\vert ^2\lambda\nu(\dy) -\frac{2\theta}{\Lambda} h\hat\pi'\sigma\phi-\frac{2\theta}{\Lambda} h\hat qb\Bigr]\geq 0
\end{align}
for all $\eta,\psi$.
The above two estimates clearly imply
\begin{align*}
& Y\Big\vert \hat\eta+\frac{1}{ Y}\Bigl(Z+\frac{\theta}{\Lambda} h\sigma'\hat\pi\Bigr)\Big\vert ^2 +\int_{\R_+}(Y+V)\Big\vert \hat\psi+\frac{1}{Y+V}\Bigl(V-\frac{\theta}{\Lambda}h\hat qy\Bigr)\Big\vert ^2\lambda\nu(\dy)\nn\\
& -\Bigl[f+\frac{1}{Y}\Big\vert Z+\frac{\theta}{\Lambda} h\sigma'\hat\pi\Big\vert ^2+\int_{\R_+}\frac{1}{Y+V}\Big\vert V-\frac{\theta}{\Lambda}h\hat qy\Big\vert ^2\lambda\nu(\dy) -\frac{2\theta}{\Lambda} h\hat\pi'\sigma\phi-\frac{2\theta}{\Lambda} h\hat qb\Bigr]= 0,
\end{align*}
which motives us to take
\[
\hat\eta=-\frac{1}{ Y}\Bigl(Z+\frac{\theta}{\Lambda} h\sigma'\hat\pi\Bigr), ~
\hat\psi=-\frac{1}{Y+V}\Bigl(V-\frac{\theta}{\Lambda}h\hat qy\Bigr)\]
and
\begin{align}
\label{expressf1}
f =&-\frac{1}{Y}\Big\vert Z+\frac{\theta}{\Lambda} h\sigma'\hat\pi\Big\vert ^2
+\frac{2\theta}{\Lambda} h\hat\pi'\sigma\phi+\frac{2\theta}{\Lambda} h\hat qb - \int_{\R_+}\frac{1}{Y+V}\Big\vert V-\frac{\theta}{\Lambda}h\hat qy\Big\vert ^2\lambda\nu(\dy).
\end{align}
To find proper $\hat\pi$ and $\hat q$, substituting these expressions into \eqref {ineq2}, we get an equivalent condition of \eqref{ineq2}:
\begin{align}
&\quad\;Y\Big\vert -\frac{1}{ Y}\Bigl(Z+\frac{\theta}{\Lambda} h\sigma'\hat\pi\Bigr)+\frac{1}{ Y}\Bigl(Z+\frac{\theta}{\Lambda} h\sigma'\pi\Bigr)\Big\vert ^2\nonumber\\
&\quad\;~+\int_{\R_+}(Y+V)\Big\vert -\frac{1}{Y+V}\Bigl(V-\frac{\theta}{\Lambda}h\hat qy\Bigr) +\frac{1}{Y+V}\Bigl(V-\frac{\theta}{\Lambda}h qy\Bigr)\Big\vert ^2\lambda\nu(\dy)\nn\\
& \leq -\frac{1}{Y}\Big\vert Z+\frac{\theta}{\Lambda} h\sigma'\hat\pi\Big\vert ^2+\frac{1}{Y}\Big\vert Z+\frac{\theta}{\Lambda} h\sigma'\pi\Big\vert ^2\nn\\
&\quad\;~-\int_{\R_+}\frac{1}{Y+V}\Big\vert V-\frac{\theta}{\Lambda}h\hat qy\Big\vert ^2\lambda\nu(\dy) +\int_{\R_+}\frac{1}{Y+V}\Big\vert V-\frac{\theta}{\Lambda}hqy\Big\vert ^2\lambda\nu(\dy)\nn\\
&\quad\;~+\frac{2\theta}{\Lambda} h\hat\pi'\sigma\phi-\frac{2\theta}{\Lambda} h\pi'\sigma\phi
+\frac{2\theta}{\Lambda} h\hat qb-\frac{2\theta}{\Lambda} hqb.
\end{align}
Write $u=\frac{\theta}{\Lambda} h\sigma'\pi$, $\xi=\frac{\theta}{\Lambda} h\sigma'\hat\pi$,
$\alpha=\frac{\theta}{\Lambda} h q$ and $\rho=\frac{\theta}{\Lambda} h\hat q$, then the above becomes
\begin{align}\label{projcon}
&(Y\phi-Z-\xi)'(u-\xi)+Y(\alpha-\rho) \Bigl[\int_{\R_+}\frac{V}{Y+V}y\lambda\nu(\dy)+b
-\rho\int_{\R_+}\frac{1}{Y+V}y^2\lambda\nu(\dy)\Bigr]\leq 0.
\end{align}
Because $\theta$, $\Lambda$, $h>0$, we have $q\in\R^{+}$ if and only if so is $\alpha$.
Because $\Pi$ is a cone and $\pi\in \Pi$, we see $u\in \sigma'\Pi $, 
where, for every $(t,\omega)$, the set $\sigma_t(\omega)'\Pi$ is a closed convex cone, defined as $\sigma_t(\omega)'\Pi=\bigl\{\sigma_t(\omega)'\pi\;\big\vert \;\pi\in\Pi\bigr\}$.

Now we conclude \eqref{projcon} holds for all $u\in \sigma'\Pi $, $\alpha\in\R^{+}$ if
\begin{equation}
 \label{projcon1}
(Y\phi-Z-\xi)'(u-\xi)\leq 0,
\end{equation}
holds for all $u\in \sigma'\Pi $, and
\begin{equation}
 \label{projcon2}
\biggl[\frac{\int_{\R_+}\frac{V}{Y+V}y\lambda\nu(\dy)+b}{\int_{\R_+}\frac{1}{Y+V}y^2\lambda\nu(\dy)}
-\rho\biggr](\alpha-\rho)\leq 0,
\end{equation}
holds for all $\alpha\in\R^{+}$.

The following result is a consequence of the second projection theorem
in~\cite[Theorem 9.8]{Beck}.
\begin{lemma}
\label{proj}
Suppose $\xi\in\sigma'\Pi$, $\rho\geq0$. Then the inequality \eqref{projcon1} holds for all $u\in \sigma'\Pi $ if and only if
\begin{equation}
\label{xi}
\xi=\proj_{\sigma'\Pi}(\phi Y-Z),
\end{equation}
and the inequality \eqref{projcon2} holds for all $\alpha\in\R_+$ if and only if
\begin{equation}
\label{rho}
\rho 
=\frac{\Bigl(\int_{\R_+}\frac{V}{Y+V}y\lambda\nu(\dy)+b\Bigr)^+}{\int_{\R_+}\frac{1}{Y+V}y^2\lambda\nu(\dy)}.
\end{equation}
Here, $\proj_{C}(y)$ denotes the projection of $y$ to a closed convex set
$C$, which is uniquely determined by
\[\vert y-\proj_{C}(y)\vert =\min_{c\in C}\vert y-c\vert .
\] 
\end{lemma}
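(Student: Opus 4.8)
The plan is to read both \eqref{projcon1} and \eqref{projcon2} as the variational (obtuse-angle) characterization of the Euclidean projection onto a closed convex set, and then invoke the second projection theorem \cite[Theorem 9.8]{Beck} in each case. That theorem states that, for a nonempty closed convex set $C$ and a point $x$, one has $p=\proj_C(x)$ if and only if $p\in C$ and $(x-p)'(c-p)\leq 0$ for every $c\in C$; the ``if and only if'' delivers both directions of each equivalence at once.

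For the first equivalence I would take $C=\sigma'\Pi$ and $x=\phi Y-Z$. Since $\Pi$ is a closed convex cone and $\sigma'$ has full column rank (because $\sigma\sigma'\geq\delta 1_m$ forces $\mathrm{rank}\,\sigma=m$, so $\sigma'$ is injective and maps the closed cone $\Pi$ onto a closed cone), the set $\sigma'\Pi$ is, as already noted, a nonempty closed convex cone, and the hypotheses of the theorem hold for each fixed $(t,\omega)$. Inequality \eqref{projcon1}, i.e.\ $(\phi Y-Z-\xi)'(u-\xi)\leq 0$ for all $u\in\sigma'\Pi$, is then verbatim the variational inequality for the projection, so under the standing assumption $\xi\in\sigma'\Pi$ it holds if and only if $\xi=\proj_{\sigma'\Pi}(\phi Y-Z)$, which is \eqref{xi}.

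For the second equivalence I would first record that $B:=\int_{\R_+}\frac{1}{Y+V}y^2\lambda\nu(\dy)>0$, since $Y+V>0$ by \eqref{Y0} and $\sigma_Y^2=\int_{\R_+}y^2\nu(\dy)>0$ by \eqref{def:bY}, so that the scalar $A/B$ with $A:=\int_{\R_+}\frac{V}{Y+V}y\lambda\nu(\dy)+b$ is well defined. Inequality \eqref{projcon2} then reads $(A/B-\rho)(\alpha-\rho)\leq 0$ for all $\alpha\in\R_+$, which is exactly the one-dimensional instance of the same variational inequality with $C=\R_+$ and target point $A/B$. Since $\rho\geq 0$ by hypothesis, the theorem yields $\rho=\proj_{\R_+}(A/B)=(A/B)^+$, and $B>0$ gives $(A/B)^+=A^+/B$, which is precisely \eqref{rho}.

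I expect the only genuine subtleties to be the two pointwise well-definedness facts just used, namely that $\sigma'\Pi$ is closed (so its projection exists for a.e.\ $(t,\omega)$) and that $B>0$ (so the scalar target $A/B$ is finite); both are supplied by $\sigma\sigma'\geq\delta 1_m$ and the constraints $Y+V>0$, $\sigma_Y^2>0$ built into the model. As a self-contained alternative for the scalar claim, one can verify \eqref{projcon2}$\iff$\eqref{rho} by an elementary sign analysis: when $A>0$ the requirement for all $\alpha\geq0$ forces $\rho=A/B$, whereas when $A\leq 0$ choosing $\alpha=0$ together with $\rho\geq0$ forces $\rho=0$; in both cases $\rho=A^+/B$.
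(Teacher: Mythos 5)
Your proof is correct and takes essentially the same route as the paper, which establishes the lemma precisely by citing the second projection theorem of \cite[Theorem 9.8]{Beck} without further detail. Your application of the variational characterization with $C=\sigma'\Pi$ and $C=\R_+$, plus the routine checks that $\sigma'\Pi$ is closed (via injectivity of $\sigma'$ from $\sigma\sigma'\geq\delta 1_m$) and that the denominator $\int_{\R_+}\frac{1}{Y+V}y^2\lambda\nu(\dy)$ is strictly positive, simply fills in the verifications the paper leaves implicit.
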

Taking the above expressions into \eqref{expressf1} yields
\begin{align*}
 f&=-\frac{1}{Y}\Big\vert Z+\frac{\theta}{\Lambda} h\sigma'\hat\pi\Big\vert ^2-\int_{\R_+}\frac{1}{Y+V}\Big\vert V-\frac{\theta}{\Lambda}h\hat qy\Big\vert ^2\lambda\nu(\dy) +\frac{2\theta}{\Lambda} h\hat\pi'\sigma\phi+\frac{2\theta}{\Lambda} h\hat qb\\
&=-\frac{1}{Y}\vert Z+\xi\vert ^2+2\phi'\xi-\int_{\R_+}\frac{1}{Y+V}\vert V-\rho y\vert ^2\lambda\nu(\dy)+2b\rho.
\end{align*}
Overall, we conjecture that
\begin{align}
f&=-\frac{1}{Y}\vert Z+\xi\vert ^2+2\phi'\xi -\int_{\R_+}\frac{1}{Y+V}\vert V-\rho y\vert ^2\lambda\nu(\dy)+2b\rho,\label{funf}\\
 \hat\eta &=-\frac{1}{ Y}\Bigl(Z+\proj_{\sigma'\Pi}(Y\phi-Z)\Bigr), \ \hat\psi=-\frac{1}{Y+V}(V-\rho y),\label{saddleeta}\\
 \hat\pi &=\frac{\Lambda^{\hat\eta,\hat\psi}} {h\theta}(\sigma\sigma')^{-1} \sigma\proj_{\sigma'\Pi}(Y\phi-Z), \ \hat q=\frac{\Lambda^{\hat\eta,\hat\psi}}{h\theta}\rho, \label{hatpi}
\end{align}
where $\xi$ and $\rho$ are given in \eqref{xi} and \eqref{rho} respectively.
Notice that
\begin{align}\label{fexp}
-\frac{1}{Y}\big\vert Z+\xi\big\vert ^2+2\phi'\xi 
=&-\frac{1}{Y}\vert \xi\vert ^2+\frac{2}{Y}(\phi Y-Z)'\xi -\frac{1}{Y}\vert Z\vert ^2\notag\\
=&-\frac{1}{Y} \vert \xi-(\phi Y- Z)\vert ^2+\frac{1}{Y}\vert \phi Y- Z\vert ^2-\frac{1}{ Y}\vert Z\vert ^2\notag\\
=&-\frac{1}{Y}\inf_{ \pi\in\Pi}\vert \sigma' \pi-(\phi Y- Z)\vert ^2
+\frac{1}{Y}\vert \phi Y- Z\vert ^2-\frac{1}{ Y}\vert Z\vert ^2\notag\\
=&-\frac{1}{Y}\inf_{ \pi\in\Pi}\Bigl[\pi'\sigma\sigma' \pi-2\pi'\sigma(\phi Y- Z)\Bigr]-\frac{1}{ Y}\vert Z\vert ^2,
\end{align}
and
\begin{align}
 \int_{\R_+}\frac{1}{Y+V}\Big\vert V-\rho y\Big\vert ^2\lambda\nu(\dy)+2b\rho 
=&\frac{\Bigl[\Bigl(\int_{\R_+}\frac{V}{Y+V}y\lambda\nu(\dy)+b\Bigr)^+\Bigr]^2}
{\int_{\R_+}\frac{1}{Y+V}y^2\lambda\nu(\dy)}-\int_{\R_+}\frac{V^2}{Y+V}\lambda\nu(\dy).
\end{align}
Now we get the desired BSDE \eqref{Y0}:
\begin{equation}
\label{Y} 
\begin{cases} 
\dd Y=\biggm\{\frac{1}{Y}\inf\limits_{ \pi\in\Pi}\Bigl[\pi'\sigma\sigma' \pi-2\pi'\sigma(\phi Y- Z)\Bigr]+\frac{1}{ Y}\vert Z\vert ^2\\
\qquad\qquad- \frac{\Bigl[\Bigl(\int_{\R_+}\frac{V}{Y+V}y\lambda\nu(\dy)+b\Bigr)^+\Bigr]^2}{\int_{\R_+}\frac{1}{Y+V}y^2\lambda\nu(\dy)}+\int_{\R_+}\frac{V^2}{Y+V}\lambda\nu(\dy)\biggm\}\dt\\
\qquad\quad+Z'\dw+\int_{\R_+}V(y)\tilde\gamma(\dt,\dy),\medskip\\
Y_T=1,\quad Y>0, \quad Y+V>0,
\end{cases}\end{equation}
In the next section, we will solve this BSDE and use its solution to derive a solution to the problem \eqref{MMV}.

\section{Solutions to the BSDE \eqref{Y} and the problem \eqref{MMV}}
 \label{Verification}

\begin{definition}
\label{def}
A {triple} $(Y,Z,V)$ is called a solution to the BSDE \eqref{Y} if it satisfies all the equalities and inequalities in \eqref{Y} and $(Y, Z,V)\in S^{\infty}_{\mathbb{F}}(0,T;\mathbb{R})\times L^{2}_{\mathbb{F}}(0,T;\mathbb{R}^m)\times\linnu$. The solution is called uniformly positive if both $Y\geq \delta$ and $Y+V\geq \delta$ a.s. with some deterministic constant $\delta>0$.
\end{definition}

\begin{lemma}
\label{Th:Y}
There exists a unique uniformly positive solution $(Y,Z,V)$ to the BSDE \eqref{Y}.
\end{lemma}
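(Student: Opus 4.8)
The plan is to regard \eqref{Y} as a quadratic BSDE with jumps whose driver blows up on $\{Y=0\}\cup\{Y+V=0\}$, and to construct the unique uniformly positive solution by a truncation scheme backed by $k$-uniform a priori estimates. For each $k\in\mathbb N$ I would regularize the singular coefficients, replacing $\frac1Y$ by $\frac{1}{Y\vee k^{-1}}$ and the denominators $\frac{1}{Y+V}$ by $\frac{1}{(Y+V)\vee k^{-1}}$, and capping the quadratic dependence on $Z$. Using the representation \eqref{fexp}, the diffusion part of the $dt$-coefficient equals $\frac1Y|Z+\xi|^2-2\phi'\xi$ with $\xi=\proj_{\sigma'\Pi}(\phi Y-Z)$; since $0\in\sigma'\Pi$ the projection is nonexpansive, so $|\xi|\le|\phi Y-Z|$ and this term has genuine quadratic growth in $Z$ with coefficient $\frac{1}{Y\vee k^{-1}}\le k$. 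The truncated generator is then Lipschitz in $(Y,Z,V)$, so the truncated equation admits a unique bounded solution $(Y^k,Z^k,V^k)$ by the standard theory of Lipschitz BSDEs with jumps and bounded terminal data.

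The heart of the proof is a family of $k$-independent a priori bounds: deterministic lower bounds $Y^k\ge\delta>0$ and $Y^k+V^k\ge\delta$, a deterministic upper bound $Y^k\le M$, and $k$-uniform control of the BMO norm of $\int(Z^k)'\dw$ and of $\|V^k\|_{\linnu}$. Here the assumption that the claims $Y_i$ are bounded is essential: it forces $\nu$ to have bounded support, so that $\int_{\R_+}y\lambda\nu(\dy)$ and $\int_{\R_+}y^2\lambda\nu(\dy)$ are finite and, once $Y+V$ is pinched into $[\delta,M]$, the denominator $\int_{\R_+}\frac{1}{Y+V}y^2\lambda\nu(\dy)$ stays bounded away from $0$ and $\infty$ while the quotient $\frac{V}{Y+V}=1-\frac{Y}{Y+V}$ stays bounded, making the whole jump term bounded and Lipschitz on $\{\delta\le Y+V\le M\}$. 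To get the upper bound on $Y^k$ once a lower bound is available, I would use the exponential transform $e^{\gamma Y^k}$ to absorb the quadratic $Z$-term, reducing to a linear-growth estimate controlled by $Y_T=1$; to get the lower bounds I would bound the $dt$-coefficient from below by $-CY^k+\beta'Z^k-C'$ with $|\beta|$ bounded (the nonnegative terms $\frac1Y|Z+\xi|^2$ and $\int\frac{V^2}{Y+V}\lambda\nu$ only help), remove $\beta'Z^k$ by a Girsanov change of measure generated by a BMO martingale, and compare with the positive solution of the resulting linear equation with terminal value $1$.

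Given these $k$-independent estimates, for $k$ large the truncation is never active on the region the solution occupies, so $(Y^k,Z^k,V^k)$ eventually solves the original \eqref{Y}; alternatively one passes to the limit $k\to\infty$ by stability of quadratic BSDEs with jumps, obtaining $(Y,Z,V)\in S^{\infty}_{\mathbb{F}}(0,T;\R)\times L^2_{\mathbb{F}}(0,T;\R^m)\times\linnu$ with $Y\ge\delta$ and $Y+V\ge\delta$, i.e.\ a uniformly positive solution. For uniqueness I would take two uniformly positive bounded solutions, subtract the equations, and linearize: thanks to the two-sided bounds and $Y+V\ge\delta$, each difference of nonlinear terms can be written as a bounded coefficient (a BMO one for the $Z$-difference) times $(Y^1-Y^2,Z^1-Z^2,V^1-V^2)$, and a Girsanov change of measure driven by the resulting BMO martingale turns $Y^1-Y^2$ into a martingale with terminal value $0$, forcing $Y^1\equiv Y^2$ and hence $Z^1\equiv Z^2$, $V^1\equiv V^2$.

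The main obstacle is the mutual coupling of the positivity and boundedness estimates: the quadratic coefficient $\frac1Y$ is tamed only after $Y\ge\delta$ is known, whereas bounding the jump driver (hence proving $Y+V\ge\delta$ and the lower bound on $Y$) requires the upper bound $Y+V\le M$ already in hand. Untangling this circular dependence — establishing $\delta\le Y$, $\delta\le Y+V\le M$ and $Y\le M$ simultaneously and uniformly in $k$, with the bounded support of $\nu$ doing the essential work in the jump terms — is where the real difficulty lies, in contrast to the comparatively routine Lipschitz existence and the linearized uniqueness. If one prefers, this step can instead be imported from the companion paper \cite{SX}, where the same type of BSDE with jumps is analyzed.
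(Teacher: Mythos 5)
Your overall scheme (Lipschitz truncation, $k$-uniform two-sided estimates, passage to the limit, linearized uniqueness under a BMO--Girsanov change of measure) is the standard machinery for singular quadratic BSDEs with jumps, but as submitted it has a genuine gap exactly at the point you yourself flag: the simultaneous $k$-uniform bounds $\delta\le Y^k$, $\delta\le Y^k+V^k\le M$, $Y^k\le M$ are described, not proved. The circularity is real and your sketch does not break it: your lower-bound comparison needs the jump fraction $-\bigl[\bigl(\int_{\R_+}\frac{V}{Y+V}y\lambda\nu(\dy)+b\bigr)^+\bigr]^2\big/\int_{\R_+}\frac{1}{Y+V}y^2\lambda\nu(\dy)$ to be bounded below by a constant, whose denominator is only bounded away from $0$ once $\sup(Y+V)\le M$ is known; and your exponential-transform upper bound needs the weight $1/Y$ in the quadratic term tamed, i.e.\ $Y\ge\delta$ first. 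In the truncated system the denominators are merely $\ge k^{-1}$, so the constants produced by these arguments blow up in $k$ rather than being $k$-uniform, and nothing in the proposal supplies the missing uniform estimate. Your closing fallback (``import this step from \cite{SX}'') does not patch it either, because \cite{SX} analyzes the BSDE \eqref{P}, not \eqref{Y}; the citation buys you nothing until you exhibit a bridge between the two equations, and that bridge is never written down. (One point in your favor: the bound $Y+V\ge\delta$ need not be proved separately once $Y\ge\delta$ is known --- a predictable set of positive $\dt\otimes\dd\mathbb{P}\otimes\nu$-measure on which $Y_-+V<\delta$ would be charged by $\gamma$ with positive probability, producing a jump of $Y$ below $\delta$.)

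The paper's actual proof is precisely that missing bridge, and nothing else: it introduces the BSDE \eqref{P}, quotes \cite[Theorem 3.3]{SX} for a unique solution with $c_1\le P$ and $P+\Gamma\le c_2$, and verifies by It\^o's formula that $(Y,Z,V)=\bigl(\frac{1}{P},\,-\frac{\Delta}{P^2},\,\frac{1}{P+\Gamma}-\frac{1}{P}\bigr)$, i.e.\ \eqref{PtoY}, solves \eqref{Y}; uniqueness of uniformly positive solutions is then immediate because this change of variables is invertible, with no linearization argument needed. The reciprocal transformation is not mere bookkeeping: in $P$-coordinates the generator of \eqref{P} is manifestly nonpositive (take $\pi=0$ in the infimum, and note the jump fraction is $\le 0$), so $P_t\le\E[P_T\,\vert\,\mathcal F_t]=1$ --- equivalently $Y\ge 1$ --- drops out by comparison, an estimate that is invisible in your $Y$-coordinates, where by \eqref{projpro} and \eqref{fexp} the drift has no usable sign. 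Carried to completion, your program would amount to re-proving \cite[Theorem 3.3]{SX} in the less favorable coordinates; what the paper's route buys is that all the hard analysis is done once, in the coordinates where a one-sided bound is free, and is then transported by an elementary bijection.
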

\begin{proof}
Consider (the arguments $t$ and $\omega$ are suppressed)
\begin{equation}
\label{P}
\begin{cases} 
\dd P_t
=-\biggm\{\inf\limits_{\pi\in\Pi}\Bigl[P\pi'\sigma\sigma'\pi-2\pi'(P\sigma\phi+\sigma\Delta)\Bigr] -\frac{\Bigl[\Bigl(Pb-\int_{\R_+}\Gamma(y)y\lambda\nu(\dy)\Bigr)^+\Bigr]^2}{\int_{\R_+}(P+\Gamma(y))y^2\lambda\nu(\dy)}\biggm\}\dt\\
\quad\quad\quad+\Delta'\dw+\int_{\R_+}\Gamma(y)\tilde\gamma(\dt,\dy), \\
P_T=1,\quad P>0, \quad P+\Gamma>0.
\end{cases}\end{equation}
{From~\cite[Theorem 3.3]{SX}, the above BSDE \eqref{P} admits a unique
solution $(P,\Delta,\Gamma)\in S^{\infty}_{\mathbb{F}}(0,T;\mathbb{R})\times L^{2}_{\mathbb{F}}(0,T;\mathbb{R}^m)\times\linnu$ such that $c_1\leq P$, $P+\Gamma\leq c_2$ with some
deterministic constants $c_2>c_1>0$.}
Then
\begin{equation}
\label{PtoY}
(Y,Z,V):= \Bigl(\frac{1}{P}, -\frac{\Delta}{P^2}, \ \frac{1}{P+\Gamma}-\frac{1}{P}\Bigr),
\end{equation}
is well-defined. 
It can be directly verified, using It\^o's formula, that $(Y,Z,V)$ is a solution to \eqref{Y}.
The above change $(P,\Delta,\Gamma)\to (Y,Z,V)$ is invertible, so the uniqueness of uniformly positive solution to \eqref{Y} follows from the fact that \eqref{P} admits a unique solution.
\end{proof}

The following result provides a complete answer to the problem \eqref{MMV}.
\begin{theorem}
\label{solution1}
Let $(Y,Z,V)$ be the unique uniformly positive solution to \eqref{Y}.
Let $h$, $\hat\eta, \hat\psi, \hat\pi, \rho, \hat q$, be defined in \eqref{h}, \eqref{saddleeta}, \eqref{rho}, \eqref{hatpi}, respectively.
Then $(\hat\pi,\hat q, \hat\eta,\hat\psi)$ is a saddle point for
the constrained MMV problem \eqref{MMV} with the optimal value
\begin{equation}
\label{MMVvalue}
\sup_{(\pi,q)\in\mathcal{U}}\inf_{(\eta,\psi)\in\mathcal{A}}\E^{\mathbb{P}^{\eta}}\Bigl[X_T+\frac{1}{2\theta}(\Lambda^\eta_T-1)\Bigr]
=xh_0+\frac{1}{2\theta}(Y_0-1).
\end{equation}
\end{theorem}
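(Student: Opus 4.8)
The plan is to verify the four properties of Section~\ref{Heuri} for the quadruple $(\hat\pi,\hat q,\hat\eta,\hat\psi)$: properties~1 and~2 hold by the terminal/initial conditions of \eqref{Y} and \eqref{h}, while properties~3 and~4 are exactly the two saddle inequalities, so together they yield the assertion with $R_0=xh_0+\frac{1}{2\theta}(Y_0-1)$. I would first settle admissibility, since this is where Lemma~\ref{Th:Y} is used in an essential way. That $(\hat\pi,\hat q)\in\mathcal U$ is geometric: writing $\xi=\proj_{\sigma'\Pi}(Y\phi-Z)=\sigma'\pi_0$ with $\pi_0\in\Pi$, one has $(\sigma\sigma')^{-1}\sigma\xi=\pi_0\in\Pi$, so $\hat\pi=\frac{\Lambda^{\hat\eta,\hat\psi}}{h\theta}\pi_0\in\Pi$ and $\hat q\ge0$ because $\Pi$ is a cone and the scalar factor is positive; square-integrability then comes from the bounds on $Y,Z,V,h$ and from $\E[(\Lambda^{\hat\eta,\hat\psi}_t)^2]<\infty$.

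The genuinely delicate admissibility point is $(\hat\eta,\hat\psi)\in\mathcal A$. Uniform positivity $Y\ge\delta$ together with essential boundedness of $Y+V$ gives
\[
1+\hat\psi=\frac{Y+\rho y}{Y+V}\ge\frac{\delta}{\|Y+V\|_\infty}=:\epsilon>0,
\]
i.e.\ $\hat\psi\ge-1+\epsilon$, which is exactly what Kazamaki's criterion needs for $\Lambda^{\hat\eta,\hat\psi}$ to be a strictly positive martingale; this is the subtle point advertised in the introduction. For \eqref{BMO1} I would use that the martingale part of a BSDE with bounded solution is BMO, so $Z,V$ are BMO, and then $|\hat\eta|\le C(1+|Z|)$ and $|\hat\psi|\le C(1+|V|)$ (using $Y,Y+V\ge\delta$, boundedness of $\phi$, of $\rho$, and of the claims); $\E[(\Lambda^{\hat\eta,\hat\psi}_T)^2]<\infty$ then follows from the reverse Hölder inequality for exponentials of BMO martingales with jumps bounded below by $-1+\epsilon$.

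Next I would record the master identity \eqref{exp}, obtained by applying It\^o's formula to $R^{(\eta,\psi,\pi,q)}_t=X^{\pi,q}_th_t+\frac{1}{2\theta}(\Lambda^{\eta,\psi}_tY_t-1)$ and taking $\E^{\mathbb P^{\eta,\psi}}$; the one thing to check is that the Brownian and compensated-jump integrals in \eqref{exp} are true $\mathbb P^{\eta,\psi}$-martingales, and this integrability estimate (spending the BMO and $L^2$ bounds) is the most tedious but routine step. Property~3 is then immediate: freezing $(\eta,\psi)=(\hat\eta,\hat\psi)$ makes $\Lambda^{\eta,\psi}$ play no role in the optimization, the drift in \eqref{exp} reduces to \eqref{projcon}, and Lemma~\ref{proj} with the projections \eqref{xi}, \eqref{rho} shows it is $\le0$ for every $(\pi,q)$; hence $R^{(\hat\eta,\hat\psi,\pi,q)}$ is a $\mathbb P^{\hat\eta,\hat\psi}$-supermartingale and $\E^{\mathbb P^{\hat\eta,\hat\psi}}[R^{(\hat\eta,\hat\psi,\pi,q)}_T]\le R_0$.

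Property~4 is the main obstacle, and I expect a naive pointwise argument to fail: because $\hat\pi,\hat q$ carry the density factor $\Lambda^{\hat\eta,\hat\psi}$ whereas the measure in \eqref{exp} is $\mathbb P^{\eta,\psi}$, the $\dt$-integrand picks up the ratio $\Lambda^{\hat\eta,\hat\psi}/\Lambda^{\eta,\psi}$ and is \emph{not} sign-definite. My plan is to sidestep this with the auxiliary process
\[
G_t:=\Lambda^{\hat\eta,\hat\psi}_tY_t+\theta h_tX^{\hat\pi,\hat q}_t,\qquad G_T=\Lambda^{\hat\eta,\hat\psi}_T+\theta X^{\hat\pi,\hat q}_T,
\]
for which a direct It\^o computation shows the drift vanishes \emph{identically under every} $\mathbb P^{\eta,\psi}$: the defining relations \eqref{saddleeta}--\eqref{hatpi} force all $\eta$- and $\psi$-dependent drift terms to cancel, and the residual reduces to $\rho\bigl(\rho\!\int_{\R_+}\!\frac{y^2}{Y+V}\lambda\nu(\dy)-\int_{\R_+}\!\frac{Vy}{Y+V}\lambda\nu(\dy)-b\bigr)$, which is $0$ by \eqref{rho}. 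Thus $\Lambda^{\eta,\psi}G$ is a $\mathbb P$-local martingale, and the same integrability makes it a true martingale, so $\E^{\mathbb P^{\eta,\psi}}[G_T]=G_0$ is independent of $(\eta,\psi)$. Writing the objective in mean--variance form,
\[
\E^{\mathbb P^{\eta,\psi}}\!\Bigl[X^{\hat\pi,\hat q}_T+\tfrac{1}{2\theta}(\Lambda^{\eta,\psi}_T-1)\Bigr]=\tfrac{1}{2\theta}\E\bigl[(\Lambda^{\eta,\psi}_T+\theta X^{\hat\pi,\hat q}_T)^2\bigr]-\tfrac{\theta}{2}\E\bigl[(X^{\hat\pi,\hat q}_T)^2\bigr]-\tfrac{1}{2\theta},
\]
and using $(\Lambda^{\eta,\psi}_T+\theta X_T)^2-(\Lambda^{\hat\eta,\hat\psi}_T+\theta X_T)^2=(\Lambda^{\eta,\psi}_T-\Lambda^{\hat\eta,\hat\psi}_T)^2+2G_T(\Lambda^{\eta,\psi}_T-\Lambda^{\hat\eta,\hat\psi}_T)$ together with $\E[G_T(\Lambda^{\eta,\psi}_T-\Lambda^{\hat\eta,\hat\psi}_T)]=\E^{\mathbb P^{\eta,\psi}}[G_T]-\E^{\mathbb P^{\hat\eta,\hat\psi}}[G_T]=0$, I obtain
\[
\E^{\mathbb P^{\eta,\psi}}\!\Bigl[X^{\hat\pi,\hat q}_T+\tfrac{1}{2\theta}(\Lambda^{\eta,\psi}_T-1)\Bigr]-R_0=\tfrac{1}{2\theta}\E\bigl[(\Lambda^{\eta,\psi}_T-\Lambda^{\hat\eta,\hat\psi}_T)^2\bigr]\ge0,
\]
which is property~4 (with equality exactly at $(\hat\eta,\hat\psi)$). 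Properties~3 and~4 then give the saddle point and the value \eqref{MMVvalue}. The crux is thus conceptual rather than computational: the inf-player's optimality cannot be read off pointwise and must instead be obtained from the measure-independence of $\E^{\mathbb P^{\eta,\psi}}[G_T]$, i.e.\ from the convex $L^2$ (mean--variance) reformulation.
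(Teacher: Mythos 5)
Your proposal is correct in substance, and for Claims 1 and 2 it follows the paper almost verbatim: the cone argument for $\hat\pi\in\Pi$, the bound $1+\hat\psi=\frac{Y+\rho y}{Y+V}\geq \delta/\|Y+V\|_\infty$ giving $\hat\psi\geq -1+\epsilon$, the BMO property of $\hat\eta$ inherited from the martingale part of the bounded BSDE solution, and the localization/dominated-convergence supermartingale argument for \eqref{saddle1} are all exactly the paper's steps. Where you genuinely diverge is Claim 3. The paper verifies \eqref{saddle2} by completing squares pointwise in the drift of $R^{(\eta,\psi,\hat\pi,\hat q)}$ and then substituting $\hat\pi=\frac{\Lambda^{\eta,\psi}}{h\theta}(\sigma\sigma')^{-1}\sigma\,\xi$, justified by the remark that ``$(\hat\pi,\hat q)$ is a feedback of $\Lambda$''; read literally this replaces $\Lambda^{\hat\eta,\hat\psi}$ by $\Lambda^{\eta,\psi}$ in the fixed strategy, which is exactly the delicate point you identify (for the fixed process $\hat\pi$, the drift picks up the ratio $\kappa=\Lambda^{\hat\eta,\hat\psi}/\Lambda^{\eta,\psi}$ and a term $-A(\kappa-1)^2$ that destroys pointwise sign-definiteness). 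Your detour — exploiting that $G_t=\theta h_tX_t^{\hat\pi,\hat q}+Y_t\Lambda_t^{\hat\eta,\hat\psi}$ is constant (this is precisely the paper's identity \eqref{identity1}, which the paper derives but uses only inside Claim 1) and then expanding the objective in the $L^2$ form $\frac{1}{2\theta}\E[(\Lambda^{\eta,\psi}_T+\theta X_T)^2]-\frac{\theta}{2}\E[X_T^2]-\frac{1}{2\theta}$ — is a genuinely different and arguably more robust argument: it bypasses the measure-dependent drift entirely and yields the quantitative identity that the inf-player's suboptimality equals $\frac{1}{2\theta}\E[(\Lambda^{\eta,\psi}_T-\Lambda^{\hat\eta,\hat\psi}_T)^2]$, something the paper's route does not exhibit.

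Two fixable gaps. First, your appeal to a reverse H\"older inequality to get $\E[(\Lambda^{\hat\eta,\hat\psi}_T)^2]<\infty$ is shaky: for a BMO martingale with jumps bounded below, $\mathcal{E}(M)$ satisfies $R_p$ only for \emph{some} $p>1$ depending on the BMO norm, not necessarily $p=2$. The paper's fix — which is already implicit in your own construction — is to read the $L^2$ bound off the identity $\Lambda^{\hat\eta,\hat\psi}_t=\theta(\tilde a-h_tX^{\hat\pi,\hat q}_t)/Y_t$ together with $\E[(X^{\hat\pi,\hat q}_\iota)^2]\leq c_3$ at all stopping times; you should use your constant process $G$ for this rather than reverse H\"older. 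Second, your final display subtracts $R_0$, which silently assumes $\E^{\mathbb{P}^{\hat\eta,\hat\psi}}\bigl[X_T^{\hat\pi,\hat q}+\frac{1}{2\theta}(\Lambda^{\hat\eta,\hat\psi}_T-1)\bigr]=R_0$; the Fatou step in the admissibility argument only gives $\E[(X_T^{\hat\pi,\hat q}-\tilde a)^2]\leq Y_0/\theta^2$, so you must upgrade this to an equality (e.g.\ via uniform integrability from $\E[\sup_t (X^{\hat\pi,\hat q}_t)^2]<\infty$, or via the vanishing-drift true-martingale property of $R^{(\hat\eta,\hat\psi,\hat\pi,\hat q)}$) before the saddle value can be pinned at $xh_0+\frac{1}{2\theta}(Y_0-1)$. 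With these two repairs, your argument is complete and delivers the theorem by a route that is cleaner than the paper's at its most delicate point.
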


\begin{proof}
To prove the theorem, it suffices to prove the following three claims.
\begin{enumerate}
\item $(\hat\pi,\hat q)\in\mathcal{U}$, $(\hat\eta,\hat\psi)\in\mathcal{A}$; 
\item it holds for all $ (\pi,q)\in\mathcal{U}$ that
\begin{equation}
\label{saddle1}
\E^{\mathbb{P}^{\hat\eta,\hat\psi}}\Bigl[X_T^{\pi,q}+\frac{1}{2\theta}(\Lambda^{\hat\eta,\hat\psi}_T-1)\Bigr]
\leq xh_0+\frac{1}{2\theta}(Y_0-1);
\end{equation}
\item it holds for all $(\eta,\psi)\in\mathcal{A}$ that
\begin{equation}
\label{saddle2}
\E^{\mathbb{P}^{\eta,\psi}}\Bigl[X_T^{\hat\pi,\hat q}+\frac{1}{2\theta}(\Lambda^{\eta,\psi}_T-1)\Bigr]
\geq xh_0+\frac{1}{2\theta}(Y_0-1).
\end{equation}
\end{enumerate}
Now we prove these claims one by one.
 \\

\noindent
\textbf{Claim 1.} We have $(\hat\pi,\hat q)\in\mathcal{U}$ and $(\hat\eta,\hat\psi)\in\mathcal{A}$. 

Let $\xi=\proj_{\sigma'\Gamma}(\phi Y-Z)$. 
Then according to Lemma \ref{proj}, \eqref{projcon1} holds.
By taking $u$ as $2\xi$ and $0$ in \eqref{projcon1} respectively, we immediately have
\begin{equation}
\label{projpro}
\vert \xi\vert ^2-\xi'(\phi Y-Z)=-\xi'(\phi Y-Z-\xi)=0.
\end{equation}
Using this and applying It\^o's formula to $\theta h_tX_t^{\hat\pi,\hat q}+Y_t\Lambda_t^{\hat\eta,\hat\psi}$, one can obtain
\[
\dd\;(\theta h_tX_t^{\hat\pi,\hat q}+Y_t\Lambda_t^{\hat\eta,\hat\psi})=0\]
so that
\begin{equation}
\label{identity1}
\theta h_tX_t^{\hat\pi,\hat q}+Y_t\Lambda_t^{\hat\eta,\hat\psi}\equiv\theta h_0 x+Y_0, \ t\in[0,T].
\end{equation}
Since $\xi\in\sigma' \Pi$, there exists $\beta\in\Pi$, such that $\xi=\sigma'\beta$. Let $\rho$ be defined by \eqref{rho}. From \eqref{hatpi},
we get 
\[
\hat\pi=\frac{\Lambda^{\hat\eta,\hat\psi}} {h\theta}(\sigma\sigma')^{-1} \sigma\proj_{\sigma'\Gamma}(Y\phi-Z) 
=\frac{\Lambda^{\hat\eta,\hat\psi}} {h\theta}\beta.
\]
Because $\Pi$ is cone, $\Lambda$, $h$, $\theta>0$ and $\beta\in\Pi$,
we see $\hat\pi\in\Pi$. By \eqref{identity1},
\begin{equation}
\label{hatsigmapi2}
\sigma'_t\hat \pi_t=\frac{\Lambda_t^{\hat\eta,\hat\psi}} {h_t\theta}\sigma'_t\beta_t=\frac{\Lambda_t^{\hat\eta,\hat\psi}} {h_t\theta}\xi_t 
=\frac{\tilde a- h_t X^{\hat\pi,\hat q}_t}{ h_tY_t}\xi_t,
\end{equation} 
where $\tilde a:= h_0x+Y_0/\theta$ is a constant.
Similarly, $\hat q$ can be rewritten as
\begin{equation}
\label{hatq2}
\hat q_t=\frac{\tilde a- h_t X^{\hat\pi,\hat q}_t}{ h_tY_t}\rho_t.
\end{equation}
Substituting \eqref{hatsigmapi2} and \eqref{hatq2} into \eqref{wealth}, and recalling \eqref{h}, we have
\begin{align*}
\dd\;(h X ^{\hat\pi,\hat q} -\tilde a)&=-\frac{1}{Y}(h X ^{\hat\pi,\hat q} -\tilde a)
\Bigl[(\xi'\phi+\rho b)\dt+\xi'\dw-\rho\int_{\R_+}y\tilde\gamma(\dt,\dy)\Bigr].
\end{align*}
Applying It\^o's formula to $\frac{1}{Y}(h X^{\hat\pi,\hat q}-\tilde a)^2$ and using \eqref{funf} and f\eqref{Y}, we have
\begin{align*}
\dd\Bigl[\frac{1}{Y }(h X^{\hat\pi,\hat q} -\tilde a)^2\Bigr]
&=\Bigl\{\frac{\vert \xi\vert ^2}{Y}-2(\xi'\phi+\rho b)-\frac{\vert Z+\xi\vert ^2}{Y}+2\phi'\xi +2b\rho+\frac{\vert Z\vert ^2}{Y}+ \frac{2}{Y}Z'\xi\\
&\quad~+\frac{1}{Y}\int_{\R_+}\Bigl[\rho^2 y^2-\frac{Y}{Y+V}(V-\rho y)^2 +Y^3\Bigl(\frac{1}{Y+V}-\frac{1}{Y}+\frac{V}{Y^2}\Bigr)\\
&\quad~+Y^3\Bigl(\frac{\rho^2y^2}{Y^2}+2\frac{\rho}{Y}y\Bigr)
\Bigl(\frac{1}{Y+V}-\frac{1}{Y}\Bigr)\Bigr]\lambda\nu(\dy)\Bigr\}\frac{1}{Y^2}(h X^{\hat\pi,\hat q} -\tilde a)^2\dt\\
&\quad+(\cdots)\dw+ (\cdots)\tilde\gamma(\dt,\dy).
\end{align*}
Using \eqref{projpro}, a direct calculation shows that the term in $\{\cdots\}$ before $\dt$ equals $0$. Therefore, $\frac{1}{Y}(h X^{\hat\pi,\hat q}-\tilde a)^2$ is a local martingale.
{From now on, let $\tau_n, n = 1,2, \ldots $, be a localizing sequence of stopping times for the local martingales
given by some stochastic integrals with respect to the Brownian motion and the compensated Poisson random measure which may vary in different cases.}
Then
\[
\E\Bigl[\frac{1}{Y_{\iota\wedge\tau_n}}(h_{\iota\wedge\tau_n} X^{\hat\pi,\hat q}_{\iota\wedge\tau_n}-\tilde a)^2\Bigr]=\frac{1}{Y_0}(h_0x-\tilde a)^2=\frac{Y_0}{\theta^2},\]
for any stopping time $\iota\leq T$. 
Sending $n\rightarrow\infty$, it follows from Fatou's Lemma that
\[
\E\Bigl[\frac{1}{Y_{\iota}}(h_{\iota} X^{\hat\pi,\hat q}_{\iota}-\tilde a)^2\Bigr]\leq \frac{Y_0}{\theta^2}.
\] 
Since $c_1\leq Y$, $h\leq c_2$ for some constants $c_2>c_1>0$, we get for some constant $c_3$
\begin{equation}
\label{X2int}
\E\Bigl[ (X^{\hat\pi,\hat q}_{\iota})^2\Bigr]\leq c_3,
\end{equation}
holds for any stopping time $\iota\leq T$. Now it is standard to prove
 $(\hat\pi,\hat q)\in L^2_{\mathcal{F}}(0,T;\mathbb{R}^{m+1})$, see e.g.~\cite[Lemma 3.8]{SX}. This proves $(\hat\pi,\hat q)\in\mathcal{U}$.

From \ref{Th:Y}, there {exist} two positive constants $c_1\leq c_2$ such that
$c_1\leq Y, Y+V, h\leq c_2$. 
Recalling the definition of $\hat\psi$ in \eqref{saddleeta},
\[
\hat\psi_t(y)=-1+\frac{Y_t}{Y_t+V_t(y)}+\rho y\geq -1+\frac{c_1}{c_2}.
\]
As we assumed the size of all claims is uniformly bounded, so $\nu$ is compactly supported, which implies that $\hat\psi\in L^{\infty,\nu}_{\mathcal{P}}(0,T;\R)$.

{From the proof of~\cite[Theorem 3.2, Theorem 3.3]{SX}, $\Delta$, the
second component of the unique solution to \eqref{P}, actually satisfies
\eqref{BMO1}. Moreover, from the fact that $P$ is uniformly positive
and bounded, and the relationship \eqref{PtoY}, we have
\[
\vert \hat\eta\vert =\Big\vert -\frac{1}{ Y}\Bigl(Z+\proj_{\sigma'\Pi}(Y\phi-Z)\Bigr)\Big\vert \leq c_4\vert Z\vert \leq c_5\vert \Delta\vert ,\]
for some positive constants $c_4,c_5$. Therefore $\hat\eta$ also satisfies \eqref{BMO1}.}

On the other hand,
by virtue of \eqref{identity1} and \eqref{X2int}, we immediately get that there exists some positive constant $c_6$,
\begin{equation}
\label{Lambdasqu}
\E[(\Lambda^{\hat\eta,\hat\psi}_{\iota})^2]<c_6,
\end{equation}
for any stopping time $\iota\leq T$. Hence the last requirement in \eqref{def:A} is also satisfied.
This proves $(\hat\eta,\hat\psi)\in\mathcal{A}$, completing the proof of the first claim.
\\

\noindent
\textbf{Claim 2.} The inequality \eqref{saddle1} holds for all $ (\pi,q)\in\mathcal{U}$.

Write
\begin{equation}
\label{defineR1}
R^{(\eta,\psi,\pi,q)}_t=X_t^{\pi,q}h_t+\frac{1}{2\theta}(\Lambda_t^{\eta,\psi} Y_t-1).
\end{equation}
For any $(\pi,q)\in\mathcal{U}$, applying It\^o's formula to $R_{t}^{(\hat\eta,\hat\psi,\pi,q)}$, {and taking expectation with respect to the probability measure $\mathbb{P}^{\hat\eta,\hat\psi}$}, we have
\begin{align}\label{substep2}
\E^{\mathbb{P}^{\hat\eta,\hat\psi}}[R^{(\hat\eta,\hat\psi,\pi,q)}_{T\wedge\tau_n}] 
&= xh_0+\frac{1}{2\theta}(Y_0-1)+\E^{\mathbb{P}^{\hat\eta,\hat\psi}}\int_0^{T\wedge\tau_n}\Bigl\{h\pi'(\sigma\phi + \sigma\hat\eta)\nn\\
&\quad~+hq\Bigl(b-\int_{\R_+}y\hat\psi\lambda\nu(\dy)\Bigr)+\frac{\Lambda}{2\theta}\Bigl[\frac{1}{Y}\vert Z+\xi\vert ^2-2\phi'\xi\nn\\
&\quad~-2b\rho+\hat\eta'Z+\int_{\R_+}\Bigl(\hat\psi V+\frac{(V-\rho y)^2}{Y+V}\Bigr)\lambda\nu(\dy)\nn\\
&\quad~+(Z+Y\hat\eta)'\hat\eta+\int_{\R_+}(V+Y\hat\psi+V\hat\psi)\hat\psi\lambda\nu(\dy)\Bigr]\Bigr\}\dt,
\end{align}
{recalling the stopping times $\tau_n, n = 1,2, \ldots $, are a localizing sequence of the local martingales
given by some stochastic integrals with respect to the Brownian motion and the compensated Poisson random measure.} 
On one hand, for any $\pi\in\Pi$, according to \eqref{projcon1}, \ref{proj} and \eqref{projpro}, we get
\begin{align}
\label{substep1}
h\pi'(\sigma\phi + \sigma\hat\eta)&=\frac{h}{Y}\pi'\sigma(\phi Y-Z-\xi ) \leq \frac{h}{Y}\xi'(\phi Y-Z-\xi)=0.
\end{align}
A similar consideration shows that 
\[
hq\Bigl(b-\int_{\R_+}y\hat\psi\lambda\nu(\dy)\Bigr)\leq 0.
\]
 On the other hand, using \eqref{xi}, \eqref{rho}, \eqref{saddleeta} and \eqref{projpro}, it is not hard to show the term in $\bigl[\cdots\bigr]$ in \eqref{substep2} equals $0$.

Combining above, we arrive at
\begin{align}\label{step2}
&\E^{\mathbb{P}^{\hat\eta,\hat\psi}}[R^{(\hat\eta,\hat\psi,\pi,q)}_{T\wedge\tau_n}]
=\E[\Lambda^{\hat\eta,\hat\psi}_{T\wedge\tau_n}R^{(\hat\eta,\hat\psi,\pi,q)}_{T\wedge\tau_n}] \leq xh_0+\frac{1}{2\theta}(Y_0-1), \ \forall(\pi,q)\in\mathcal{U}.
\end{align} 
For any $(\pi,q)\in\mathcal{U}$, it is standard to verify 
\[
\E\Bigl[\sup_{t\in[0,T]}(X^{\pi,q}_t)^2\Bigr]<\infty.
\]
Since $h$ and $Y$ are bounded, $Y\geq c>0$, we get from \eqref{identity1} and \eqref{defineR1} that
\begin{align*}
\E\Bigl[\sup_{t\in[0,T]}\vert \Lambda^{\hat\eta,\hat\psi}_tR^{\hat\eta,\hat\psi,\pi,q}_t\vert \Bigr] 
&\leq c\E\Bigl[\sup_{t\in[0,T]}(\vert X_t^{\hat\pi, \hat q}\vert +1)(\vert X_t^{\hat\pi, \hat q}\vert +\vert X_t^{\pi, q}\vert +1)\Bigr]\\
&\leq c\E\Bigl[\sup_{t\in[0,T]}\vert X_t^{\hat\pi, \hat q}\vert ^{2}\Bigr]
+c\E\Bigl[\sup_{t\in[0,T]}\vert X_t^{\pi, q}\vert ^{2}\Bigr]+c
<\infty,
\end{align*}
where $c>0$ is a constant that may vary from line to line. 

Sending $n\rightarrow\infty$ in \eqref{step2} and using the dominated convergence theorem, we get \eqref{saddle1}.
\\

\noindent 
\textbf{Claim 3.} The inequality \eqref{saddle2} holds for all $(\eta,\psi)\in\mathcal{A}$.
 
For any $(\eta,\psi)\in\mathcal{A}$, we have
\begin{align}
\E^{\mathbb{P}^{\eta,\psi}}[R^{(\eta,\psi,\hat\pi,\hat q)}_{T\wedge\tau_n}]
 =&xh_0+\frac{1}{2\theta}(Y_0-1)
+\E^{\mathbb{P}^{\eta,\psi}}\int_0^{T\wedge\tau_n}\frac{\Lambda^{\eta,\psi}}{2\theta}
\Bigl[Y\vert \eta\vert ^2+2\eta'Z\nn\\
&~+\frac{2\theta}{\Lambda^{\eta,\psi}}h\hat\pi' \sigma\eta+\frac{2\theta}{\Lambda^{\eta,\psi}}h\hat\pi'\sigma\phi + \frac{2\theta}{\Lambda^{\eta,\psi}}h\hat qb-f\nn\\
&~+\int_{\R_+}\Bigl((Y+V)\psi^2+2\psi V -\frac{2\theta}{\Lambda^{\eta,\psi}}h\hat qy\psi\Bigr)\lambda\nu(\dy)\Bigr]\dt.\label{goon}
\end{align}
Completing the squares with respect to $\eta$ and $\psi$ respectively, the drift term in \eqref{goon} is no less than
\begin{align}
\label{goon2}
&\frac{\Lambda^{\eta,\psi}}{2\theta}\Bigl[-\frac{1}{ Y}\Big\vert Z+\frac{\theta}{\Lambda^{\eta,\psi}}h\sigma'\hat\pi\Big\vert ^2+\frac{2\theta}{\Lambda^{\eta,\psi}}h\hat\pi'\sigma\phi
-\frac{1}{Y+V}\Bigl(V-\frac{\theta}{\Lambda^{\eta,\psi}}h\hat q y\Bigr)^2+\frac{2\theta}{\Lambda^{\eta,\psi}}h\hat qb- f\Bigr].
\end{align}
Please note that $(\hat\pi,\hat q)$ is a feedback of $\Lambda$, so under $(\eta,\psi)$, the optimal $(\hat\pi,\hat q)$ in \eqref{hatpi} should take
\[
\hat\pi =\frac{\Lambda^{\eta,\psi}} {h\theta}(\sigma\sigma')^{-1} \sigma\proj_{\sigma'\Gamma}(Y\phi-Z), \ \hat q=\frac{\Lambda^{\eta,\psi}}{h\theta}\rho.
\]
Recall \eqref{hatsigmapi2}, \eqref{xi} and \eqref{funf}, so \eqref{goon2} is equal to
\begin{align}\label{goon3}
&\frac{\Lambda^{\eta,\psi}}{2\theta}\Bigl[-\frac{1}{Y}\vert Z+\xi\vert ^2+2\phi'\xi -\frac{1}{Y+V}(V-\rho y)^2+2\rho b-f\Bigr]=0.
\end{align}
Hence from \eqref{goon}, \eqref{goon2} and \eqref{goon3}, we have 
\[
\E^{\mathbb{P}^{\eta,\psi}}[R^{(\eta,\psi,\hat\pi,\hat q)}_{T\wedge\tau_n}]\geq xh_0+\frac{1}{2\theta}(Y_0-1).
\]
Similar to the previous argument, we can prove 
\[
\E[\sup_{t\in[0,T]}\vert \Lambda^{\eta,\psi}_tR_t^{(\eta,\psi,\hat\pi,\hat q)}\vert ]<\infty.
\]
 By sending $n\rightarrow\infty$ in above, we get \eqref{saddle2}.
This complete the proof.
\end{proof}

\section{Link to the classical MV problem}
\label{Comparison}
The classical MV problem (see, e.g.~\cite{ZL}) is
\begin{equation}
\label{MV}
\sup_{(\pi,q)\in\mathcal{U}}\Bigl[\E(X_T)-\frac{\theta}{2}\mbox{Var}(X_T)\Bigr].
\end{equation}
We will firstly solve the problem \eqref{MV} and then draw a comparison with the MMV problem \eqref{MMV}.

For any $z\in\mathbb{R}$, let us consider
\begin{equation}
\label{MV1}
F(z):= \inf_{(\pi,q)\in\mathcal{U}^z}\E\Bigl[(X_T-z)^2\Bigr]=\inf_{(\pi,q)\in\mathcal{U}^z}\Bigl[\E(X_T^2)-z^2\Bigr],
\end{equation}
where
\[
\mathcal{U}^z:= \Bigl\{(\pi,q)\in\mathcal{U}\;\Big\vert \;\E(X_T^{\pi,q})=z\Bigr\},\]
with the convention that $\inf\emptyset =+\infty$. {By definition}, $F(z)\geq0$.

As illustrated in~\cite[page 12]{TZ},
the connection between problems \eqref{MV} and \eqref{MV1} is given as follows:
\begin{align}
\sup_{(\pi,q)\in\mathcal{U}}\Bigl[\E(X_T)-\frac{\theta}{2}\mbox{Var}(X_T)\Bigr] 
&=\sup_{z\in\mathbb{R}}\sup_{(\pi,q)\in\mathcal{U}^z}\Bigl[z-\frac{\theta}{2}\Bigl(E(X_T^2)-z^2\Bigr)\Bigr]\nn\\
&=\sup_{z\in\mathbb{R}}\Bigl[z-\frac{\theta}{2}\inf_{(\pi,q)\in\mathcal{U}^z}\Bigl(E(X_T^2)-z^2\Bigr)\Bigr]=\sup_{z\in\mathbb{R}}\Bigl[z-\frac{\theta}{2} F(z)\Bigr].\label{Lag0}
\end{align}

In order to solve \eqref{MV1}, we introduce a Lagrange multiplier $\zeta\in\mathbb{R}$ and consider
\begin{align}
\label{MV2}
J(z,\zeta)&:= \inf_{(\pi,q)\in\mathcal{U}}\E\Bigl[{X_T^2-z^2-2\zeta(X_T-z)}\Bigr] =\inf_{(\pi,q)\in\mathcal{U}}\Bigl[\E(X_T-\zeta)^2-(z-\zeta)^2\Bigr].
\end{align}
By the Lagrange duality theorem (see Luenberger~\cite{Lu}),
\begin{equation}
\label{Lag1}
F(z)=\sup_{\zeta\in\mathbb{R}}J(z,\zeta), \ z\in\mathbb{R}.
\end{equation}

According to~\cite{SX},
the solution for problem \eqref{MV2} depends on the following two coupled BSDEs:
\begin{equation}
\label{P1}
\begin{cases}
\dd P_1=-\Bigl[2rP_1+F_1^{*}(t,P_1,\Delta_{1})+G_1^{*}(P_1,\Gamma_1,P_2,\Gamma_2)\Bigr]\dt +\Delta_{1}'\dw+\int_{\R_+}\Gamma_1(y)\tilde \gamma(\dt,\dy), \\
P_{1,T}=1, \ P_{1}>0, \ P_{1}+\Gamma_{1}>0, \ P_{2}+\Gamma_{2}>0,
\end{cases}\end{equation}
and
\begin{equation}
\label{P2}
\begin{cases}
\dd P_2=-\Bigl[2rP_2+F_2^{*}(t,P_2,\Delta_{2})+G_2^{*}(P_2,\Gamma_2)\Bigr]\dt +\Delta_{2}'\dw+\int_{\R_+}\Gamma_2(y)\tilde \gamma(\dt,\dy), \\
P_{2,T}=1, \ P_{2}>0, \ \ P_{2}+\Gamma_{2}>0,
\end{cases}\end{equation}
where, 
for $(t,\omega,v,u,P_{1},\Delta_{1},\Gamma_1,P_{2},\Delta_{2},\Gamma_2)\in[0,T]\times\Omega\times\Pi\times\R_+\times
\R_+\times\R^m\times \lnu\times\R_+\times\R^m\times \lnu$, we define
\begin{align*} 
F_1(t,\omega,v,P_1,\Delta_1)&:= P_1\vert \sigma'_tv\vert ^2+2v'(P_1\mu_t+\sigma_t\Delta_1),\\
F_2(t,\omega,v,P_2,\Delta_2)&:= P_2\vert \sigma'_tv\vert ^2-2v'(P_2\mu_t+\sigma_t\Delta_2),\\
F_1^{*}(t,\omega,P_1,\Delta_1)&:= \inf_{v\in\Pi}F_1(t,\omega,v,P_1,\Delta_1),\\
F_2^{*}(t,\omega,P_2,\Delta_2)&:= \inf_{v\in\Pi}F_2(t,\omega,v,P_2,\Delta_2),
\end{align*}
and
\begin{align*}
&G_1^{*}(P_1,\Gamma_1,P_2,\Gamma_2):= \inf_{u\geq0}G_1(u,P_1,\Gamma_1,P_2,\Gamma_2), \\
&G_2^{*}(P_2,\Gamma_2)
:= -\frac{\Bigl[\Bigl(P_2 b -\int_{\R_+}\Gamma_2(y) y\lambda\nu(\dy)\Bigr)^+\Bigr]^2}
{\int_{\R_+} (P_2+\Gamma_2(y))y^2\lambda\nu(\dy)},
\end{align*}
and
\begin{align*}
G_1(u,P_1,\Gamma_1,P_2,\Gamma_2)
:=& \int_{\R_+}\Bigl[(P_1+\Gamma_1(y))\bigl[[(1-uy)^+]^2-1\bigr]\\
&\quad+(P_2+\Gamma_2(y))[(1-uy)^-]^2\Bigr]\lambda \nu(\dy)+2uP_1(b+\lambda b_Y).
\end{align*}

The definitions of solutions to \eqref{P1} and \eqref{P2} agree with the one in \ref{def}. Recall that $h_t=e^{\int_t^Tr_s\ds}$ is deterministic and bounded.
From~\cite[Theorem 3.2, Theorem 3.3 and Lemma 4.1]{SX}, there exists a unique
uniformly positive solution $(P_1,\Delta_1,\Gamma_1,P_2,\Delta_2,\Gamma_2) $ to the coupled BSDEs \eqref{P1} and \eqref{P2};
moreover, one has
$P_{1,0}\leq h_0^2$ and $P_{2,0}< h_0^2$.

The following result for the problem \eqref{MV2} comes from~\cite[Theorem 3.7]{SX}. We present it here in terms of our notation.
\begin{lemma}
\label{optimalMV1}
Let $(P_1,\Delta_1,\Gamma_1)$ and $(P_2,\Delta_2,\Gamma_2)$ be the unique uniformly positive solutions to \eqref{P1} and \eqref{P2}, respectively.
Let
\begin{align*}
\xi_{1,t}&:= (\sigma_t\sigma_t')^{-1}\sigma_t Proj_{\sigma'\Gamma}\Bigl(-\phi_t-\frac{\Delta_{1,t}}{P_{1,t}}\Bigr), \\
\xi_{2,t} &:= (\sigma_t\sigma_t')^{-1}\sigma_t Proj_{\sigma'\Gamma}\Bigl(\phi_t+\frac{\Delta_{2,t}}{P_{2,t}}\Bigr),\\ 
\rho_{1,t}&:= \argmin_{u\geq0} G_1(u,P_{1,t},\Gamma_{1,t},P_{2,t},\Gamma_{2,t}),\\
\rho_{2,t} &:= \frac{\Bigl(P_{2,t}b-\int_{\R_+}\Gamma_{2,t}(y)y\lambda\nu(\dy)\Bigr)^+}
{\int_{\R_+}(P_{2,t}+\Gamma_{2,t}(y))y^2\lambda\nu(\dy)}. 
\end{align*}
Then the pair of feedback investment-reinsurance strategies
\begin{align}
\pi^\zeta(t, X)&=\Bigl(X_t-\frac{\zeta}{h_t}\Bigr)^+ \xi_{1,t} +\Bigl(X_t-\frac{\zeta} {h_t}\Bigr)^-\xi_{2,t},\label{pistar}\\
q^{\zeta}(t,X)&=\Bigl(X_t-\frac{\zeta}{h_t}\Bigr)^+ \rho_{1,t} +\Bigl(X_t-\frac{\zeta} {h_t}\Bigr)^-\rho_{2,t}
\end{align}
is optimal for the problem \eqref{MV2} with the optimal value 
\[
J(z,\zeta)=P_{1,0}\Bigl[\Bigl(x-\frac{\zeta}{h_0}\Bigr)^+\Bigr]^2
+P_{2,0}\Bigl[\Bigl(x-\frac{\zeta}{h_0}\Bigr)^-\Bigr]^2-(\zeta-z)^2.
\]
\end{lemma}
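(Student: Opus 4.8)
The plan is to prove a verification theorem of linear--quadratic type. By the second equality in \eqref{MV2} we have $J(z,\zeta)=\inf_{(\pi,q)\in\mathcal U}\E[(X_T-\zeta)^2]-(z-\zeta)^2$, so it suffices to show that $\inf_{(\pi,q)\in\mathcal U}\E[(X_T-\zeta)^2]$ equals $P_{1,0}[(x-\zeta/h_0)^+]^2+P_{2,0}[(x-\zeta/h_0)^-]^2$ and is attained at the feedback control \eqref{pistar}. First I would introduce the shifted state $Y_t:=X_t-\zeta/h_t$. Since $h_T=1$ and $\dd h_t=-r_th_t\dt$, a direct computation shows that $Y$ obeys the \emph{same} dynamics \eqref{wealth} (with $a=0$), namely $\dd Y_t=[r_tY_{t-}+\pi_t'\sigma_t\phi_t+bq_t]\dt+\pi_t'\sigma_t\dw_t-q_t\int_{\R_+}y\tilde\gamma(\dt,\dy)$, with terminal value $Y_T=X_T-\zeta$; the deterministic discounted shift is absorbed.

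Next I would construct the candidate value process
\[
M_t:=P_{1,t}\bigl[(Y_t)^+\bigr]^2+P_{2,t}\bigl[(Y_t)^-\bigr]^2 .
\]
Because $P_{1,T}=P_{2,T}=1$ we have $M_T=(X_T-\zeta)^2$, while $M_0=P_{1,0}[(x-\zeta/h_0)^+]^2+P_{2,0}[(x-\zeta/h_0)^-]^2$ is exactly the claimed value. The core is to apply It\^o's formula to $M$. The maps $y\mapsto(y^\pm)^2$ are $C^{1}$ with Lipschitz (hence a.e.~bounded) derivatives, so they are $C^{1,1}$ and the generalized It\^o formula applies to the continuous part with no local-time contribution, while the jump part is handled through the exact increments $M_s-M_{s-}$ and their compensator. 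Writing $\dd M_t=\mathcal L M_t\,\dt+\dd N_t$ with $N$ a local martingale, I would regroup $\mathcal L M_t$ in terms of the drivers of \eqref{P1} and \eqref{P2}.

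The key computation is this regrouping. On $\{Y_t>0\}$, writing $\pi_t=Y_tv$ with $v\in\Pi$ and $q_t=Y_tu$ with $u\geq0$, the discounting contributions $2r_tP_1$ cancel against $\partial_YM\cdot r_tY_t$, the premium contributions cancel between the continuous drift and the jump compensator, and one obtains
\[
\mathcal L M_t=Y_t^2\bigl[(F_1(t,v,P_1,\Delta_1)-F_1^{*})+(G_1(u,P_1,\Gamma_1,P_2,\Gamma_2)-G_1^{*})\bigr].
\]
The jump integrand $(P_1+\Gamma_1)[(1-uy)^+]^2+(P_2+\Gamma_2)[(1-uy)^-]^2$ arises precisely because a downward jump $\Delta Y=-q_ty$ can carry a positive deviation across the origin, which is why $G_1$ couples both $P_1$ and $P_2$. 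On $\{Y_t<0\}$ the downward jumps keep $Y$ negative, so only $P_2$ enters; the analogous identity holds with $F_2,F_2^{*}$ in place of $F_1,F_1^{*}$, and the reinsurance part of the drift is minimized over the retention variable at $\rho_{2}$ with minimal value $G_2^{*}$. The set $\{Y_t=0\}$ contributes nothing, by the $C^1$ matching at the origin. Since $F_i(\cdot)\geq F_i^{*}$ and $G_1(\cdot)\geq G_1^{*}$ by definition, $\mathcal L M_t\geq0$ for every admissible control, with equality exactly when $v$ and $u$ attain the infima, i.e.~at $v=\xi_1$ (resp.~$\xi_2$) via $\proj_{\sigma'\Pi}$ and at the retention $\rho_1$ (resp.~$\rho_2$); this is the feedback \eqref{pistar}.

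Finally I would close with the usual localization. Choosing a localizing sequence $\tau_n$ for $N$ and taking expectations gives $\E[M_{T\wedge\tau_n}]\geq M_0$ for all $(\pi,q)\in\mathcal U$, with equality for $(\pi^\zeta,q^\zeta)$; beforehand I would check $(\pi^\zeta,q^\zeta)\in\mathcal U$, where cone membership follows from $(\sigma\sigma')^{-1}\sigma\sigma'=1_m$ (so each $\xi_i\in\Pi$), nonnegativity of $q^\zeta$ from $\rho_i\geq0$, and square integrability from a second-moment estimate on the wealth as in \eqref{X2int}. Passing $n\to\infty$, Fatou's lemma yields $\inf_{(\pi,q)\in\mathcal U}\E[(X_T-\zeta)^2]\geq M_0$, while dominated convergence (using the uniform bounds on $P_i$ and $\E[\sup_t(X_t)^2]<\infty$) gives equality at the candidate, so $J(z,\zeta)=M_0-(z-\zeta)^2$ as claimed. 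I expect the main obstacle to be the middle step: carrying out It\^o's formula carefully for the piecewise, non-$C^2$ function of a jump--diffusion whose coefficients $P_1,P_2$ are themselves jump processes, and in particular accounting for the origin-crossing jumps so that every term collapses exactly onto the drivers $F_i^{*},G_i^{*}$. A secondary difficulty is the integrability and uniform integrability needed to justify the limits, which rests on the uniform bounds on $(P_i,\Delta_i,\Gamma_i)$ from \cite{SX} and a priori moment estimates on $X$.
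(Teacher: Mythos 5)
You should know at the outset that the paper contains no proof of Lemma \ref{optimalMV1} to compare against: the result is imported wholesale from the companion paper (``The following result for the problem \eqref{MV2} comes from \cite[Theorem 3.7]{SX}''). What you have written is therefore a reconstruction of the companion paper's (unseen) verification argument, and it is the natural one --- indeed the one that the drivers of \eqref{P1}--\eqref{P2} are visibly engineered for. Your skeleton checks out: the shifted state $X_t-\zeta/h_t$ satisfies the dynamics \eqref{wealth} (with $a=0$) and has terminal value $X_T-\zeta$; the candidate $M_t=P_{1,t}[(X_t-\zeta/h_t)^+]^2+P_{2,t}[(X_t-\zeta/h_t)^-]^2$ matches $(X_T-\zeta)^2$ at $T$ and the claimed value at $0$; the Meyer--It\^o formula applies with no local time since $(y^{\pm})^2$ is $C^1$ with absolutely continuous derivative, and the occupation-times formula kills the ambiguity of $f''$ at the origin; the cone property of $\Pi$ legitimizes the normalizations $\pi=|Y|v$, $q=|Y|u$; and after cancellation of the $2rP_i$ terms the drift collapses to $Y^2[(F_1(v)-F_1^{*})+(G_1(u)-G_1^{*})]$ on $\{Y>0\}$, while on $\{Y<0\}$ (where one-sided downward jumps keep $Y$ negative, since $1+uy\geq 1$) the reinsurance part is the scalar quadratic $u^2\int_{\R_+}(P_2+\Gamma_2)y^2\lambda\nu(\dy)-2u\bigl(P_2b-\int_{\R_+}\Gamma_2 y\lambda\nu(\dy)\bigr)$, whose minimum over $u\geq0$ is exactly $G_2^{*}$ attained at $\rho_2$. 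Your structural observation --- that $G_1$ couples $(P_2,\Gamma_2)$ precisely because a claim can carry a positive deviation across zero --- is the correct explanation of the asymmetric coupling between \eqref{P1} and \eqref{P2}, and your argument that $\xi_i\in\Pi$ via $\xi_i=(\sigma\sigma')^{-1}\sigma\sigma'\beta=\beta$ for $\beta\in\Pi$ is sound.

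Two soft spots, neither fatal. First, your limit passage misuses Fatou: from $\E[M_{T\wedge\tau_n}]\geq M_0$ and $M\geq0$, Fatou gives $\E[M_T]\leq\liminf_n\E[M_{T\wedge\tau_n}]$, which is the wrong direction for the lower bound $\E[(X_T-\zeta)^2]\geq M_0$ (the submartingale inequality, unlike the supermartingale one, does not pass to the limit by Fatou). You need uniform integrability of $\{M_{T\wedge\tau_n}\}_n$ for \emph{every} admissible control, which you actually possess --- $P_1,P_2$ are bounded and $\E[\sup_{t\in[0,T]}(X_t^{\pi,q})^2]<\infty$ for any $(\pi,q)\in\mathcal{U}$, exactly as invoked in Claim 2 of the proof of Theorem \ref{solution1} --- so dominated convergence should carry both directions, and Fatou should be dropped. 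Second, admissibility of the closed-loop candidate deserves more than your one line: $\xi_{i,t}$ and $\rho_{1,t}$ involve $\Delta_{i,t}/P_{i,t}$ and an $\argmin$, and the $\Delta_i$ are only square-integrable (BMO-type), not bounded, so neither well-posedness of the closed-loop SDE under \eqref{pistar} nor $(\pi^\zeta,q^\zeta)\in L^2_{\mathbb F}$ is immediate; the analogue in this paper (Claim 1 of Theorem \ref{solution1}, which derives the stopping-time estimate \eqref{X2int} from a weighted local-martingale identity and then appeals to \cite[Lemma 3.8]{SX}) shows the amount of work genuinely required, and one also needs existence, uniqueness and predictable measurability of $\rho_1=\argmin_{u\geq0}G_1(u,\cdot)$, which follow from strict convexity and coercivity of $u\mapsto G_1$ (using $P_i+\Gamma_i\geq c_1>0$ and $\sigma_Y^2>0$). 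With those two repairs your proposal is, in outline, precisely the verification proof the cited reference carries out.
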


To solve \eqref{Lag1},
we do a tedious calculation and obtain
\[
F(z)
=
\begin{cases}
\frac{P_{2,0}(z-xh_0)^2}{h_0^2-P_{2,0}}, & \ \mbox{if} \ z> xh_0 \ \mbox{and} \ P_{2,0}<h_0^2;\\
+\infty, & \ \mbox{if} \ z<xh_0 \ \mbox{and} \ P_{1,0}=h_0^2;\\
\frac{P_{1,0}(z-xh_0)^2}{h_0^2-P_{1,0}}, & \ \mbox{if} \ z<xh_0 \ \mbox{and} \ P_{1,0}<h_0^2;\\
0, & \ \mbox{if} \ z=xh_0,
\end{cases}\]
with the argument maximum
\begin{equation}
\label{hatgammaK}
\hat\zeta(z)=
\begin{cases}
\frac{h_0^2z-xP_{2,0}h_0}{h_0^2-P_{2,0}}, & \ \mbox{if} \ z> xh_0 \ \mbox{and} \ P_{2,0}<h_0^2;\\
-\infty, & \ \mbox{if} \ z<xh_0 \ \mbox{and} \ P_{1,0}=h_0^2;\\
\frac{h_0^2z-xP_{1,0}h_0}{h_0^2-P_{1,0}}, & \ \mbox{if} \ z<xh_0 \ \mbox{and} \ P_{1,0}<h_0^2;\\
xh_0, & \ \mbox{if} \ z=xh_0.
\end{cases}\end{equation}
Accordingly,
\begin{equation}
\label{MVvalue2}
\sup_{z\in\mathbb{R}}\Bigl[z-\frac{\theta}{2}F(z)\Bigr]
=xh_0+\frac{1}{2\theta}\Bigl(\frac{h_0^2}{P_{2,0}}-1\Bigr),
\end{equation}
with the argument maximum
\begin{equation}
\label{hatK}
\hat z= x h_0+\frac{1}{\theta}\Bigl(\frac{h_0^2}{P_{2,0}}-1\Bigr)> xh_0.
\end{equation}
Substituting \eqref{hatK} into \eqref{hatgammaK}, we obtain
\[ 
\hat\zeta(\hat z)=xh_0+\frac{1}{\theta}\frac{h_0^2}{P_{2,0}}.
\] 

The above analysis leads to the following results for the problem \eqref{MV}.
\begin{theorem}
Use the notations in \ref{optimalMV1}.
Define a constant
\begin{equation} 
\hat\zeta=xh_0+\frac{1}{\theta}\frac{h_0^2}{P_{2,0}},
\end{equation}
and {a pair of feedback strategies}
\begin{equation}
\label{hatpimv}
\pi^{\hat\zeta}(t,X)=-\Bigl( X-\frac{\hat\zeta}{h_t}\Bigr)\xi_{2,t}, \ q^{\hat\zeta}=-\Bigl(X-\frac{\zeta} {h_t}\Bigr)\rho_{2,t}.
\end{equation}
Then $(\pi^{\hat\zeta},q^{\hat\zeta})$ is the optimal feedback investment-reinsurance strategy for the problem \eqref{MV} with the optimal value
\begin{equation}
\label{MVvalue}
\sup_{(\pi,q)\in\mathcal{U}}\Bigl[\E(X_T)-\frac{\theta}{2}\mbox{Var}(X_T)\Bigr]
=xh_0+\frac{1}{2\theta}\Bigl(\frac{h_0^2}{P_{2,0}}-1\Bigr).
\end{equation}
\end{theorem}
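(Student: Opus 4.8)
The plan is to combine the Lagrangian reduction already set up in \eqref{Lag0}--\eqref{Lag1} with the explicit value function for $J(z,\zeta)$ supplied by Lemma \ref{optimalMV1}, and then carry out two nested scalar optimizations. First I would recall that, by \eqref{Lag0}, the MV problem \eqref{MV} equals $\sup_{z\in\R}\bigl[z-\frac{\theta}{2}F(z)\bigr]$, and that by the Lagrange duality \eqref{Lag1} we have $F(z)=\sup_{\zeta\in\R}J(z,\zeta)$. Substituting the closed form
\[
J(z,\zeta)=P_{1,0}\bigl[(x-\zeta/h_0)^+\bigr]^2+P_{2,0}\bigl[(x-\zeta/h_0)^-\bigr]^2-(\zeta-z)^2
\]
from Lemma \ref{optimalMV1} reduces everything to elementary (if tedious) calculus.

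The second step is to maximize $J(z,\zeta)$ over $\zeta$. Because $\zeta\mapsto J(z,\zeta)$ is piecewise quadratic with a kink at $\zeta=xh_0$, one treats the two regimes $\zeta<xh_0$ and $\zeta>xh_0$ separately; using $P_{1,0}\leq h_0^2$ and $P_{2,0}<h_0^2$ (both established in~\cite{SX}) to guarantee that each piece is (weakly) concave in $\zeta$, the first-order conditions produce exactly the piecewise expression for $F(z)$ and its maximizer $\hat\zeta(z)$ recorded in \eqref{hatgammaK}, the degenerate case $P_{1,0}=h_0^2$ accounting for the $+\infty$ branch. I would then maximize $z-\frac{\theta}{2}F(z)$ over $z$; since $F(xh_0)=0$ and $F$ is a nonnegative quadratic opening upward on each side of $z=xh_0$, the optimum lies on the branch $z>xh_0$, which yields the value \eqref{MVvalue2} and the maximizer $\hat z$ in \eqref{hatK}. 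A quick check that $\hat z>xh_0$ confirms we are consistently on that branch, and substituting $\hat z$ into \eqref{hatgammaK} gives the constant $\hat\zeta=xh_0+\frac{1}{\theta}\frac{h_0^2}{P_{2,0}}$; this already establishes the optimal value \eqref{MVvalue}.

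It remains to identify the optimal feedback strategy. Here I would substitute $\zeta=\hat\zeta$ into the feedback \eqref{pistar} of Lemma \ref{optimalMV1}. The key observation is that $\hat\zeta/h_0=x+\frac{1}{\theta}\frac{h_0}{P_{2,0}}>x$, so at $t=0$ the controlled wealth satisfies $X_0-\hat\zeta/h_0<0$, placing it in the ``negative'' regime where only the $\xi_{2},\rho_{2}$ part of the feedback is active. Writing $\widetilde X_t:=X_t-\hat\zeta/h_t$ and using \eqref{h} together with the wealth dynamics \eqref{wealth}, one finds that under this feedback $\widetilde X$ solves a linear SDE $\dd\widetilde X_t=\widetilde X_{t-}\,\dd M_t$ whose driving martingale has jump coefficient $\rho_{2,t}\,y\geq 0>-1$; hence $\widetilde X_t=\widetilde X_0\,\mathcal{E}(M)_t$ keeps a constant sign and stays strictly negative on $[0,T]$. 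Consequently the $(\cdot)^+$ terms never activate and the feedback collapses to the linear form \eqref{hatpimv}. Finally I would check that this strategy is admissible (lies in $\mathcal{U}$) and that the mean constraint $\E(X_T)=\hat z$ holds, the latter being precisely the first-order condition of the $\zeta$-maximization (by the envelope identity $\partial_\zeta J=-2(\E X_T-\zeta)$ at the inner optimum), so that the candidate is feasible for $\mathcal{U}^{\hat z}$ and therefore genuinely optimal for \eqref{MV}. I expect the main obstacle to be exactly this last verification: controlling the sign of $\widetilde X$ throughout $[0,T]$ in the presence of the compensated Poisson jumps, so that the feedback legitimately reduces to the linear form and remains admissible. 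This is the point where the positivity of $\rho_2$ and of the claim sizes $y$---the jump analogue of the $\psi\geq-1+\epsilon$ restriction stressed in the introduction---must be invoked.
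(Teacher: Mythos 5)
Your proposal is correct and follows essentially the same route as the paper: the Lagrangian reduction \eqref{Lag0}--\eqref{Lag1} combined with the closed form of $J(z,\zeta)$ from Lemma \ref{optimalMV1}, the two nested scalar maximizations producing \eqref{hatgammaK}, \eqref{MVvalue2} and \eqref{hatK}, and then the sign-preservation argument (the linear SDE for $X_t-\hat\zeta/h_t$ whose jump coefficient $\rho_{2}y\geq 0>-1$ keeps the Dol\'eans-Dade exponential positive) showing that the feedback \eqref{pistar} collapses to the linear form \eqref{hatpimv} --- exactly the paper's verification. The only blemish is your envelope identity, which should read $\partial_\zeta J=-2(\E X_T-z)$ rather than $-2(\E X_T-\zeta)$; it still yields the mean constraint $\E X_T=\hat z$, a feasibility check the paper leaves implicit, so your version is if anything slightly more careful on that point.
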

\begin{proof}
Notice $\hat\zeta=\hat\zeta(\hat z)$; so, by \eqref{Lag0} and \eqref{Lag1},
\begin{align*}
\sup_{(\pi,q)\in\mathcal{U}}\Bigl[\E(X_T)-\frac{\theta}{2}\mbox{Var}(X_T)\Bigr]
&=\sup_{z\in\mathbb{R}}\Bigl[z-\frac{\theta}{2} F(z)\ddp{)}\Bigr]\\
&=\hat z-\frac{\theta}{2} F(\hat z)
=\hat z-\frac{\theta}{2} J(\hat z,\hat\zeta(\hat z))
=\hat z-\frac{\theta}{2} J(\hat z,\hat\zeta).
\end{align*}
Therefore, the optimal portfolio for the problem \eqref{MV2} with $(z,\zeta)=(\hat z,\hat\zeta)$ is also optimal to \eqref{MV}.

We now show that the strategy \eqref{hatpimv} is optimal to the problem \eqref{MV2} with $(z,\zeta)=(\hat z,\hat\zeta)$.
Taking \eqref{hatpimv} to the wealth process \eqref{wealth}, we get
\[
\begin{cases}
\dd\Bigl(X_t^{\pi^{\hat\zeta}}-\frac{\hat\zeta}{h_t}\Bigr)=\Bigl(X_t^{\pi^{\hat\zeta}}-\frac{\hat\zeta}{h_t}\Bigr)
\Bigm((r-\xi_2'\mu-b\rho_2)\dt -\xi_2\sigma'\dw+\int_{\R_+}\rho_2 y\tilde\gamma(\dt,\dy)\Bigm),\\
X^{\pi^{\hat\zeta}}_0-\frac{\hat\zeta}{h_0}=-\frac{h_0}{\theta P_{2,0}},
\end{cases}
\] 
Since $-\frac{h_0}{\theta P_{2,0}}\leq 0$ and $\rho_2\geq0$, it follows that $X_t^{\pi^{\hat\zeta}}\leq\frac{\hat\zeta}{h_t}$.
Accordingly, we can rewrite \eqref{hatpimv} as
\[
\pi^{\hat\zeta}(t, X) 
=\Bigl(X_t-\frac{\hat\zeta}{h_t}\Bigr)^+ \xi_{1,t} +\Bigl(X_t-\frac{\hat\zeta} {h_t}\Bigr)^-\xi_{2,t},\]
and
\[
q^{\hat\zeta}(t, X) 
=\Bigl(X_t-\frac{\hat\zeta}{h_t}\Bigr)^+ \rho_{1,t} +\Bigl(X_t-\frac{\hat\zeta} {h_t}\Bigr)^-\rho_{2,t},\]
which is just \eqref{pistar}, the optimal strategy for the problem \eqref{MV2} with $(z,\zeta)=(\hat z,\hat\zeta)$.

Finally, \eqref{MVvalue} comes from \eqref{Lag0} and \eqref{MVvalue2} evidently.
\end{proof}

In the end, we have the following connection between the problems \eqref{MMV} and \eqref{MV}.
\begin{theorem}
\label{com}
The problems \eqref{MMV} and \eqref{MV} share the same optimal value $xh_0+\frac{1}{2\theta}(Y_0-1)$ and the same optimal feedback portfolio
$\hat\pi=\pi^{\hat\zeta}$,
where $\hat\pi$ and $\pi^{\hat\zeta}$ are defined in \eqref{hatpi} and \eqref{hatpimv}, respectively.
\end{theorem}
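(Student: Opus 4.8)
The plan is to reduce everything to a single algebraic identity relating the two BSDE solutions that drive the MMV and MV optimizers: the solution $(P,\Delta,\Gamma)$ of \eqref{P} (which produces $(Y,Z,V)$ via \eqref{PtoY}) and the solution $(P_2,\Delta_2,\Gamma_2)$ of \eqref{P2}. The central claim I would establish first is
\[
(P_2,\Delta_2,\Gamma_2)=(h^2P,\,h^2\Delta,\,h^2\Gamma).
\]
To prove it, set $\tilde P:=h^2P$, $\tilde\Delta:=h^2\Delta$, $\tilde\Gamma:=h^2\Gamma$ and apply It\^o's formula, using $\dd(h_t^2)=-2r_th_t^2\dt$, which holds because $h_t=e^{\int_t^Tr_s\ds}$ is deterministic. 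Since $\sigma\phi=\mu$, the generator of \eqref{P} is exactly $-[F_2^{*}(t,P,\Delta)+G_2^{*}(P,\Gamma)]$, and both $F_2^{*}$ and $G_2^{*}$ are positively homogeneous of degree one in their arguments, so that $F_2^{*}(t,h^2P,h^2\Delta)=h^2F_2^{*}(t,P,\Delta)$ and likewise for $G_2^{*}$. The extra term $\dd(h^2)$ then contributes precisely the missing $-2r\tilde P$ drift, so $(\tilde P,\tilde\Delta,\tilde\Gamma)$ solves \eqref{P2} with terminal value $h_T^2=1$; uniqueness of the uniformly positive solution to \eqref{P2} (from~\cite{SX}) yields the identity.

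From this identity the equality of optimal values is immediate. Evaluating at $t=0$ gives $P_{2,0}=h_0^2P_0=h_0^2/Y_0$, hence $Y_0=h_0^2/P_{2,0}$, so the MMV value $xh_0+\frac{1}{2\theta}(Y_0-1)$ coincides with the MV value \eqref{MVvalue}. I would then record the two consequences needed for the portfolios. First, $\tilde a=\hat\zeta$, where $\tilde a=h_0x+Y_0/\theta$ appears in \eqref{hatsigmapi2} and $\hat\zeta=xh_0+\frac1\theta\frac{h_0^2}{P_{2,0}}$; this is again just $Y_0=h_0^2/P_{2,0}$. Second, using \eqref{PtoY} together with $(\Delta_2,P_2)=(h^2\Delta,h^2P)$,
\[
-\frac{Z}{Y}=\frac{\Delta}{P}=\frac{\Delta_2}{P_2}.
\]

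The last step is to put $\hat\pi$ in feedback form and match it with $\pi^{\hat\zeta}$. Writing $\xi=\proj_{\sigma'\Pi}(Y\phi-Z)=\sigma'\beta$ with $\beta\in\Pi$, relations \eqref{identity1} and \eqref{hatpi} give $\Lambda^{\hat\eta,\hat\psi}=\theta(\tilde a-hX^{\hat\pi,\hat q})/Y$ and hence the feedback expression $\hat\pi=-\bigl(X-\tilde a/h\bigr)\frac1Y\beta$. Because $\sigma'\Pi$ is a closed convex cone and $Y>0$, the projection is positively homogeneous, so $\frac1Y\beta=(\sigma\sigma')^{-1}\sigma\,\proj_{\sigma'\Pi}(\phi-Z/Y)$; combining this with $-Z/Y=\Delta_2/P_2$ identifies $\frac1Y\beta$ with $\xi_{2}$ of \ref{optimalMV1}, while $\tilde a=\hat\zeta$ identifies the prefactors, yielding $\hat\pi=-(X-\hat\zeta/h)\xi_{2}=\pi^{\hat\zeta}$ (and, by the identical computation with $\rho$ in place of $\xi$, also $\hat q=q^{\hat\zeta}$). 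The only genuinely substantive point is the scaling identity $(P_2,\Delta_2,\Gamma_2)=(h^2P,h^2\Delta,h^2\Gamma)$: I expect the main care to be needed in verifying the degree-one homogeneity of $F_2^{*}$ and $G_2^{*}$ and in invoking the uniqueness result for \eqref{P2}, after which the remaining identifications are routine applications of positive homogeneity of the cone projection.
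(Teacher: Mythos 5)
Your proposal is correct and follows essentially the same route as the paper: the paper's proof rests on the identity \eqref{YtoP}, which is precisely your scaling $(P_2,\Delta_2,\Gamma_2)=(h^2P,\,h^2\Delta,\,h^2\Gamma)$ composed with \eqref{PtoY}, and it then deduces the value equality from $Y_0=h_0^2/P_{2,0}$ and the portfolio (and reinsurance) equality from \eqref{hatsigmapi2}, \eqref{hatq2} and positive homogeneity of the cone projection, exactly as you do. The only difference is how the key identity is verified—the paper checks \eqref{YtoP} directly by It\^o's formula, whereas you obtain it from the degree-one homogeneity of $F_2^{*}$ and $G_2^{*}$ (valid since $\sigma\phi=\mu$ and $h^2$ is deterministic with $\dd(h_t^2)=-2r_th_t^2\dt$) together with uniqueness of the uniformly positive solution to \eqref{P2}—an equally valid and arguably cleaner verification of the same lemma.
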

\begin{proof}
Recall that $(Y,Z,V)$ is the unique uniformly positive solution to \eqref{Y}, and $h_t=e^{\int_t^Tr_s\ds}$. It can be directly verified, by It\^o's formula, that
\begin{equation}
\label{YtoP}
(Y,Z,V)=\Bigl(\frac{h^2}{P_2}, -\frac{h^2}{P_2^2}\Delta_2, \ \frac{h^2}{P_2+\Gamma_2}-\frac{h^2}{P_2}\Bigr),
\end{equation}
where $(P_2,\Lambda_2,\Lambda_2)$ is the unique uniformly positive solution to \eqref{P2}.
Comparing \eqref{MMVvalue} and \eqref{MVvalue}, the first claim follows.

By \eqref{hatsigmapi2}, \eqref{hatq2}, and \eqref{YtoP}, we have
\begin{align*}
\hat\pi&=\frac{xh_0+\frac{Y_0}{\theta}- h X^{\hat\pi,\hat q} }{ h Y }(\sigma\sigma')^{-1}\sigma \proj_{\sigma'\Gamma}(\phi Y-Z)\\ 
&=\frac{\hat\zeta- h X^{\hat\pi,\hat q} }{ h }(\sigma\sigma')^{-1}\sigma \proj_{\sigma'\Gamma}\Bigl(\phi +\frac{\Delta_2}{P_2}\Bigr) =-\Bigl( X ^{\hat\pi,\hat q}-\frac{\hat\zeta}{h }\Bigr)\xi_2,
\end{align*}
and
\begin{align*}
\hat q&=\frac{xh_0+\frac{Y_0}{\theta}- h X^{\hat\pi,\hat q} }{ h Y }\frac{\Bigl(\int_{\R_+}\frac{V}{Y+V}y\lambda\nu(\dy)+b\Bigr)^+}{\int_{\R_+}\frac{1}{Y+V}y^2\lambda\nu(\dy)}\\
&=\frac{xh_0+\frac{h_0^2}{\theta P_{2,0}}- h X^{\hat\pi,\hat q} }{ h ^2}P_2\frac{-\Bigl(\int_{\R_+}\frac{\Gamma_2}{P_2}y\lambda\nu(\dy)+b\Bigr)^+}{\int_{\R_+}(P_2+\Gamma_2)y^2\lambda\nu(\dy)} =-\Bigl( X ^{\hat\pi,\hat q}-\frac{\hat\zeta}{h }\Bigr)\rho_2.
\end{align*}
which is exactly the feedback optimal strategy \eqref{hatpimv}.
\end{proof}

\section{Concluding remarks}
\label{conclude}
In this paper, we studied the MMV optimal investment-reinsurance problem under the Cram\'er-Lundberg model with random coefficients and strategy constraints. {Semi-closed-form solutions}, the optimal probability measure and optimal investment-reinsurance in terms of the solutions to some BSDEs with jumps are provided. Eventually, we conclude that the solution coincides with the corresponding optimal investment-reinsurance problem under the classical MV criterion.

In the definition \eqref{def:A}, the condition $\psi\geq -1+\epsilon$ ensures that the Dol\'eans-Dade stochastic exponential $\Lambda^{\eta,\psi}$ is strictly positive. The properties of the solution $(Y,Z,V)$ to BSDE with jumps \eqref{Y} help to ensure that the optimal $\hat\psi$ defined in \eqref{saddleeta} fits exactly the subtle point $\hat\psi\geq -1+\epsilon$. 
It is critical to notice that our model only considers down side jumps since they are coming from claims. It is interesting to consider general MMV problems with two-side Poisson jumps, in which case the optimal $\hat\psi$ may not be easy to verify that it satisfies $\hat\psi\geq -1+\epsilon$.

\end{document}